\documentclass[5p]{elsarticle_1}

\usepackage{graphicx}      
\usepackage{natbib}        
\usepackage{graphicx}
\usepackage{graphics} 
\usepackage{epsfig} 
\usepackage{mathptmx} 
\usepackage{times} 
\usepackage{amsmath} 
\usepackage{amssymb}  
\usepackage{enumerate}
\usepackage{mathrsfs}
\usepackage{multirow}
\usepackage{subfigure}
\usepackage{latexsym}
\usepackage{bm}
\usepackage{dsfont}
\usepackage{multicol}
\usepackage{multirow}
\usepackage{color}
\usepackage{url}
\usepackage{algorithmic}
\usepackage{algorithm}
\usepackage{array}
\usepackage[T1]{fontenc}
\usepackage{comment}
{
	\begin{bmatrix}}%
	{\end{bmatrix}
	}










\newtheorem{remark}{Remark}
\newtheorem{lemma}{Lemma}

\newtheorem{proof}{Proof}

\newtheorem{theorem}{Theorem}

\newtheorem{assumption}{Assumption}

\graphicspath{{Figures/}}

\begin{document}
\begin{frontmatter}

\title{A hierarchical Lyapunov-based cascade adaptive control scheme for lower-limb exoskeleton}


\author[First,Fourth]{Xinglong Zhang}
\author[First,Second]{Wei Jiang}
\author[Third]{Zhizhong Li}
\author[Fourth]{Shengli Song}

\address[First]{College of Intelligence Science, National University of Defense Technology, Changsha 410073, China (zhangxinglong18@nudt.edu.cn)}
\address[Second]{Research Institute for National Defense Engineering of Academy of Military Science, Luoyang, 471300, P.R. China (weijiang@nudt.edu.cn)}
\address[Third]{State Key Laboratory of Disaster Prevention \& Mitigation of Explosion \& Impact, College of Defense Engineering, Army Engineering University, Nanjing, Jiangsu 210007, China (lizz0607@163.com)}
\address[Fourth]{College of Field Engineering, Army Engineering University, Nanjing, 210007, P.R. China (shengli.song66@gmail.com)}

\begin{abstract}
	 This paper proposes a hierarchical Lyapunov-based adaptive cascade control scheme for a lower-limb exoskeleton with control saturation. The proposed approach is composed by two control levels with cascade structure. At the higher layer of the structure, a Lyapunov-based back-stepping regulator including adaptive estimation of  uncertain parameters and friction force is designed for the leg dynamics, to minimize the deviation of the joint position and its reference value. At the lower layer, a Lyapunov-based neural network adaptive controller is  in charge of computing control action for the hydraulic servo system, to follow the force reference computed at the high level, also to compensate for model uncertainty, nonlinearity, and control saturation. \\
	 The proposed approach shows to be  capable in minimizing the interaction torque between machine and human, and suitable for possible imprecise models. The robustness of the closed-loop system is discussed under input constraint. Simulation experiments are reported, which shows that the proposed scheme is effective in imposing smaller interaction torque with respect to PD controller, and in control of models with uncertainty and nonlinearity.
\end{abstract}

\begin{keyword}
lower-limb exoskeleton, Lyapunov methods, adaptive control, cascade control, neural network.
\end{keyword}

\end{frontmatter}

\section{Introduction}
In recent years, persisting efforts have been devoted to the developments of lower-limb exoskeletons with the goal of alleviating body burden and  augmenting human motion performance  in the areas of military purposes and/or industrial applications. In these scenarios,  the payloads carried by the operators are usually quite heavy  that robotics with high power supplies are expected to reduce dramatically the loads applied to the operators.  Many exoskeletons in this direction have been developed, see~\cite{anam2012active,bogue2015robotic,yan2015review,aliman2017design,jimenez2012review} and the references therein. Among them, hydraulic actuators are usually used due to their large values of power/mass ratio, see for instance~\cite{ouyang2016development,cao2010design,nandor2012design}. Classic hydraulic actuated exoskeleton examples include Berkeley Lower Extremity Exoskeleton (BLEEX) \cite{zoss2006biomechanical,kazerooni2005control}, its updated version Human Universal Load Carrier (HULC), and etc.  Lower-limb exoskeleton system is highly nonlinear and in principle is difficult to obtain precise model parameters, which makes the control of such systems challenging.
Many works of literature address the control problems in this respect. For instance, in \cite{racine2003control}, several first-order sliding mode controllers have been proposed for the force tracking control of the hydraulic actuator of the exoskeleton. A Radius Basis Function (RBF)
based sliding mode control scheme has been developed in \cite {song2014rbf} to alleviate the chattering effect caused by the first-order sliding mode control. A simplified Lyapunov-based control approach has been addressed in \cite{alleyne2000simplified} for force tracking control of electro-hydraulic systems. Adaptation algorithm is proposed to estimate the model uncertain parameters. In~\cite{kim2017design}, a locomotive control algorithm has been addressed for normal stable walking with lower-limb exoskeleton actuated by hydraulic system. A cascade interaction torque control for hydraulic actuated lower-limb exoskeleton is proposed in~\cite{chen2016cascade}, where the integral of interaction torque is minimized to generate the joint trajectory to be followed via a PID controller. A tracking control algorithm for a knee exoskeleton has been developed in~\cite{li2017tracking}, where the interaction torque is considered as an unknown disturbance to be rejected. In~\cite{wang2016data}, a data-driven adaptive sliding mode control algorithm has been proposed for a multi degree-of-freedom robotic exoskeleton.\\
\begin{figure}[h!]
	\begin{center}
		\includegraphics[width=6.4cm]{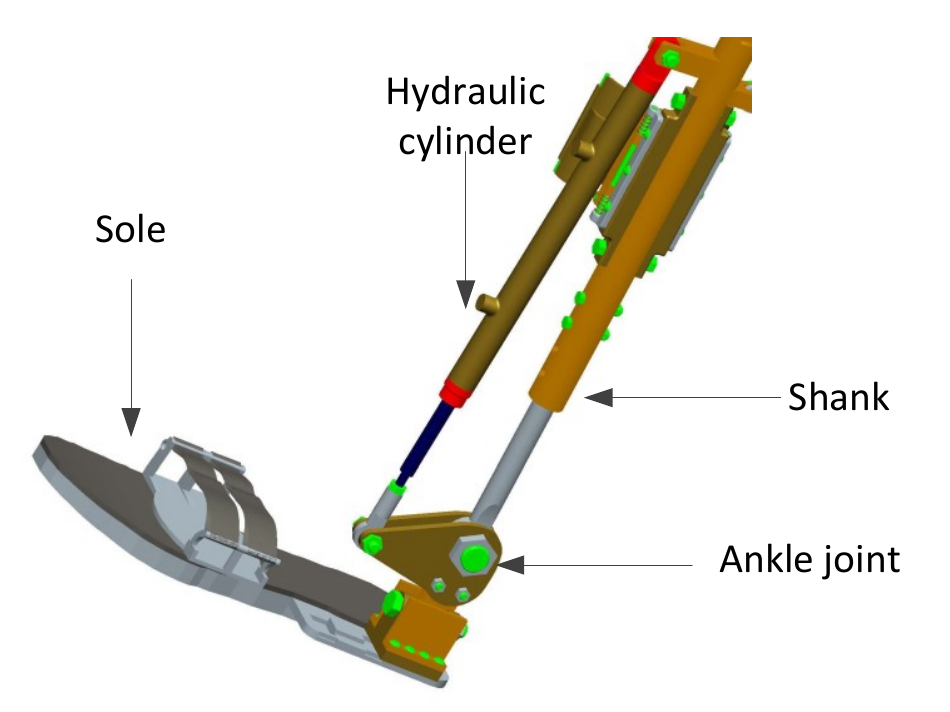}    
		\caption{Structure of ankle and shank. The flexion/extension freedom is driven by a cylinder.}
		\label{fig:leg}
	\end{center}
\end{figure}
Neural networks  have been found to be very useful for estimating nonlinear functions due to its powerful approximation capabilities. For this reason, neural networks are widely used in the context of adaptive control. An adaptive control of a class of nonlinear systems has been presented in \cite{ge2002direct} and the robustness of the closed-loop system has been proven. In \cite{zhao2016adaptive}, neural network based adaptive control has been extended to the active suspension system with actuator saturation.  Adaptive neural control of nonlinear systems with nonsmooth actuator nonlinearity has been considered in \cite{zhou2007adaptive} and a stable neural network observer has been developed in \cite{abdollahi2006stable}. However, most of the works prescribed need the assumption that the function to be approximated by neural network is continuous. In this work, with resorting to the techniques described in ~\cite{selmic1997neural,llanas2008constructive}, we extend the discussion to approximation and compensation for piecewise discontinuous function within the proposed control framework.\\
Another challenge of control of hydraulic actuated exoskeleton is how to compensate for the input saturation caused by the hydraulic actuator, which in principle might lead to poor tracking performance, including longer period of transient, unacceptable overshoot, even larger tracking error. To deal with this problem, many methods have been proposed, see for instance~\cite{chen2011adaptive,he2016adaptive}. Among them, an auxiliary system has been introduced in~\cite{he2016adaptive} for the impedance control of a robotic manipulator subject to input saturation. In~\cite{huang2013adaptive}, an adaptive control of mobile robot with torque saturation has been studied. The
torque has been designed to be a function constructed by model parameter that can be guaranteed within the saturation limit. In~\cite{xu2004iterative}, an iterative learning control approach has been proposed for nonlinear uncertain systems, of which the convergence of the state has been proven under control saturation. A PID controller has been used to control robot manipulators under bounded torque saturation with exponential stability property guaranteed using singular perturbation theory, see~\cite{santibanez2010practical}. Inspired by above techniques, in this work a Lyapunov-based virtual system is introduced  to compensate for the control saturation, where its model terms and parameters are properly designed according to Lyapunov direct method.  \\
 Fig.~\ref{fig:leg} depicts a left ankle joint equipped with a hydraulic actuator. The control problem of such a lower-limb exoskeleton is addressed in this work. Note that, as the control of one joint can be easily extended to that of the whole exoskeleton system, for simplicity, only the control problem of an ankle joint is considered in this work.\\
To be specific, a two-layer Lyapunov-based adaptive cascade control scheme for joint position control of a lower-limb exoskeleton is presented. Its control diagram is depicted in Figure~\ref{fig:scheme}. At the higher layer of the control structure, a Lyapunov-based adaptive regulator is designed for the leg dynamics including adaptation algorithms for uncertain parameters and friction estimations, with the goal of minimizing the deviation of the joint position $\varphi$ and its reference value $\varphi_d$. The outcome of this layer is the desired value of the hydraulic actuator force reference $F_{L,d}$ to be followed at the low level. At the lower layer, a Lyapunov-based neural network adaptive regulator computes the input signal $u$ for the hydraulic exoskeleton system with the scope of tracking the desired force $F_{L,d}$, meanwhile compensates for the unknown time-varying parameter and piecewise discontinuous nonlinearity with an integrated neural network composed by both continuous and discontinuous basis functions, and for control saturation with an auxiliary virtual system. The proposed approach shows to be capable in minimizing the interaction torque between machine and human, and suitable for imprecise models.
\begin{figure}[h!]
	\begin{center}
		\includegraphics[width=8.4cm]{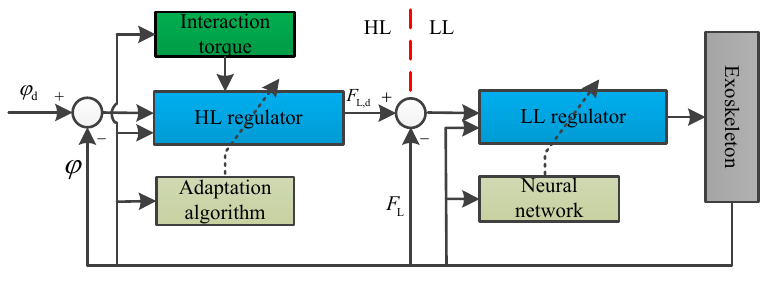}    
		\caption{ Diagram of the proposed control scheme. HL=Higher layer, LL=Lower layer.}
		\label{fig:scheme}
	\end{center}
\end{figure}
The overall robustness property of the two levels is analyzed under input constraint. Simulation experiments including a comparison with PD controller are reported, so as to verify the effectiveness of the proposed approach in reference tracking and in interaction torque minimization, and its potentiality for control of models with uncertainty and nonlinearity. \\
The main contributions of this work are summarized as follows:
\begin{itemize}
	\item
	The adopted cascade structure can improve the control performance with respect to the single layer control scheme when there exists model uncertainty and/or exogenous disturbance affecting directly the lower layer system. In this case,  the proposed algorithm can indeed limit the effects caused by the uncertainty and/or disturbance at the lower layer on the controlled variables at the higher layer. 	
	\item The approach allows for multi-objectives at the high layer, e.g., the minimization of the joint position tracking error and the optimization of the interaction torque between machine and human.
	\item  The synthesis of continuous RBF and discontinuous jump approximate function are used to estimate piecewise discontinuous function caused by system nonlinearity and friction force.
	\item The proposed control scheme shows to be suitable for systems with possibly multiple unknown (possibly time-varying) parameters and nonlinear functions.
\end{itemize}

The rest of the paper is organized as follows. In Section ~\ref{sec:2} the model is described and the control goal is introduced. The design of the high-level and low-level Lyapunov-based adaptive controller is presented in Sections~\ref{sec:3} and~\ref{sec:4}, respectively, while the closed-loop properties  are described in Section~\ref{sec:5}. The simulation results are reported  in Section \ref{sec:6}, while some conclusions are drawn in Section \ref{sec:7}.

\section{System description and control objective}\label{sec:2}
In this section the nonlinear model of the overall system
under study and the main idea to deal with the control problem are described.
\subsection{System description}
The ankle joint of the exoskeleton is driven by a hydraulic servo system, and its schematic diagram is depicted in Fig.~\ref{fig:hydraulic}. The operating principle of the hydraulic system is as follows. When the relay is closed, high-pressure oil from the pump flows through the check valve into the accumulator in case the gas pressure of the accumulator is smaller than the oil pressure. Meanwhile, a second flow direction is made where the oil flows through the left hole of the four-sided servo valve into the left (right) chamber of the valve core. The oil then flows through the servo valve to the actuator in order to drive an external load. Notice that, the measured force  of the cylinder from the force sensor is transmitted to the regulator such that the control action applied to the actuator is computed according to the error of the desired force and the measured one.
Valve openings are changeable according to the input signal computed with the controller. This determines the values of pressure and flow in the acting chamber of the actuator.
\begin{figure}
	\begin{center}
		\includegraphics[width=8.4cm]{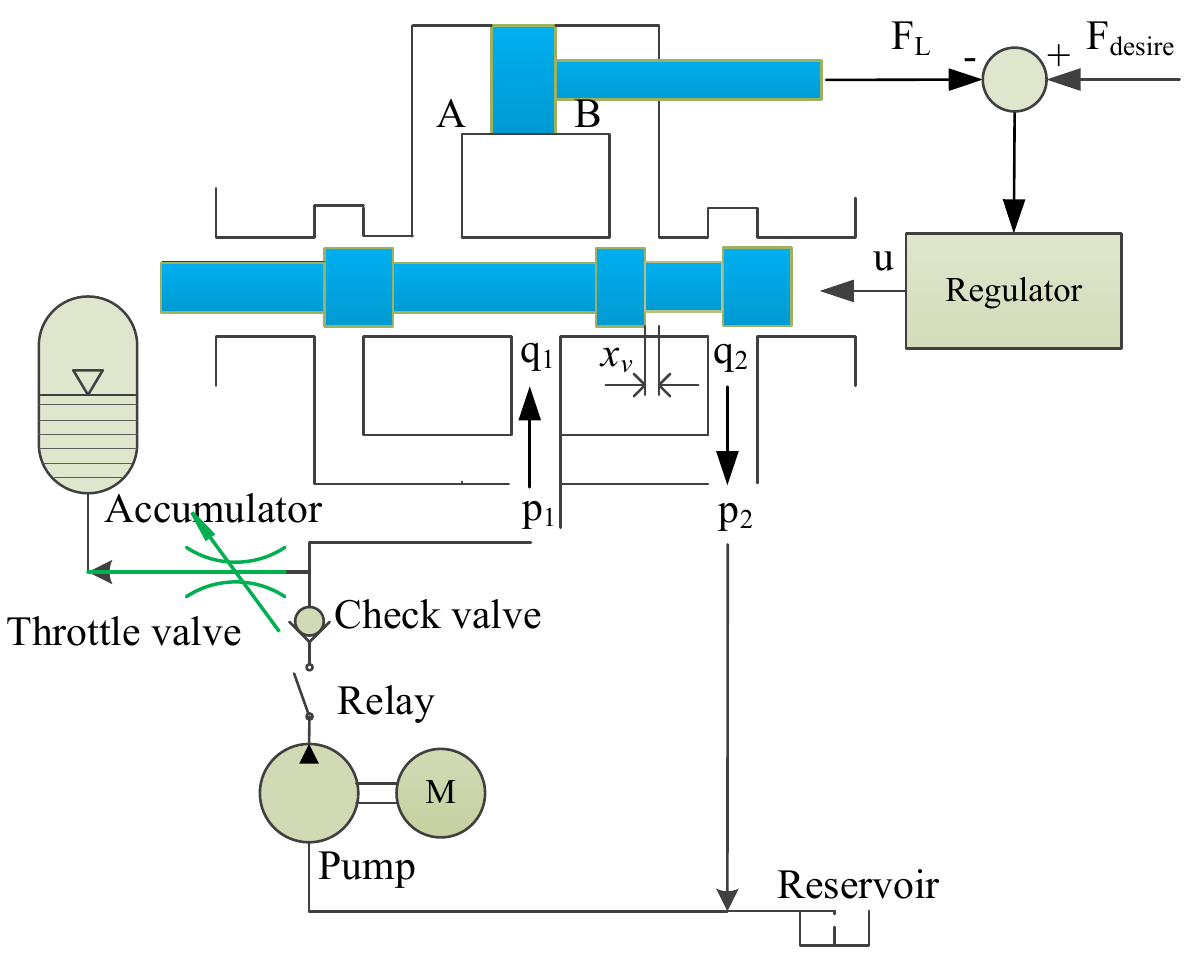}    
		\caption{The hydraulic system contains: a reservoir, pump, throttle valve, relay,
			check valve, accumulator, four-sided servo valve and cylinder.}
		\label{fig:hydraulic}
	\end{center}
\end{figure}
It is highlighted that the accumulator is designed to supply oil in place of the pump when the relay is open, with the objective of saving energy and prolonging the working time. The throttle valve is selected to guarantee the flow coming from (going into) the accumulator remaining almost constant.\\ 
To sum up, there are two working modes for the hydraulic system.
\begin{itemize}
	\item Mode 1: The relay switches on. In this case, the pump provides oil for the system, in the meanwhile, charges the accumulator with oil. Once the pressure of the accumulator reaches its high threshold value, the relay switches off, and the pump stops working.
\item Mode 2: The relay switches off. In this case, the accumulator persistently releases energy so as to provide oil to the system. Once the pressure of accumulator reaches its low threshold value (minimum working limitation value), the relay switches on, and the whole system goes back to Mode 1.
\end{itemize}
The oil flow-rate of the four-sided servo valve is given by
\begin{equation}\label{Eqn:ql}
q_{L}=K_qx_v-K_cP_L
\end{equation}
where $K_q$, $K_c$, and $x_v$ are the idle flow gain coefficient, the flow-pressure coefficient and the spool position of servo valve respectively; while $P_L$  is the hydraulic pressure associated with the external load.
Under the assumption that the compressibility of the fluid is zero, the flow balancing equation in the cylinder is simplified as
\begin{equation}\label{Eqn:ql_ba}
q_{L}=\frac{1}{2}(q_1+q_2)
\end{equation}
where $q_1$ and $q_2$ are the flow rates into chambers A and B respectively.\\
 Furthermore, it also holds that
 \begin{subequations}\label{Eqn:qi}
 	\begin{align}
 q_1=& {C_{in}}(P_1-P_2) + {C_{ec}}{P_1} +\dot {\mathcal{V}}_1 +\frac{{{{\mathcal{V}}_1}}}{\beta }{\dot P_1}\\
 q_2=& {C_{in}}(P_1-P_2) - {C_{ec}}{P_2} -\dot {\mathcal{V}}_2-\frac{{{{\mathcal{V}}_2}}}{\beta }{\dot P_2}
 \end{align}
 \end{subequations}
where $A_1$, $P_1$,  and ${\mathcal{V}}_1$ are the area of piston, the hydraulic pressure and the volume in chamber A, while $A_2$, $P_2$, and ${\mathcal{V}}_2$ are the corresponding counterpart in chamber B. $\beta$ is the effective bulk modulus.   $C_{in}$ and $C_{ex}$ are the cylinder internal and external leakage coefficients. $\dot P_i$ and $\dot {\mathcal{V}}_i$ are the derivatives of  $P_i$ and ${\mathcal{V}}_i$, $i=1,2$.\\
The  hydraulic pressures of chamber A and B satisfy
 \begin{subequations}\label{Eqn:PL-P12}
    \begin{align}
	P_L=&P_1-P_2\\
P_s=&P_1+P_2
	\end{align}
\end{subequations}
where $P_s$ is the outlet oil pressure of the pump and $P_L$ is the pressure associated with the external load.
Assuming that $P_s$ is derivable, from \eqref{Eqn:PL-P12}, it holds that
\begin{subequations}\label{Eqn:dp}
	\begin{align}
\dot P_1=&\frac{\dot P_L+\dot P_s}{2}\\
\dot P_2=&\frac{\dot P_L-\dot P_s}{2}
\end{align}
\end{subequations}
The volume ${\mathcal{V}}_i$, $i=1,2$, respect the geometric principle, i.e.,
\begin{subequations}\label{Eqn:V}
	\begin{align}
	{\mathcal{V}}_1=&{\mathcal{V}}_0+A_1x_c\\
	{\mathcal{V}}_2=&{\mathcal{V}}_0-A_2x_c
	\end{align}
\end{subequations}
where $x_c$ is the piston position of the cylinder.\\
In view of \eqref{Eqn:ql}-\eqref{Eqn:V}, the cylinder flow reads
\begin{equation}\label{Eqn:ql_r}
{q_L} = \frac{{{A_1} + {A_2}}}{2}{\dot x_c} + ({{C_{in}} + {C_{ec}}}){P_L} +\frac{{{{\mathcal{V}}_0}}}{{2\beta }}{\dot P_L}
\end{equation}
where $\dot x_c$ is the piston velocity of the cylinder.\\
The mathematical model of the accumulator is presented as
 \begin{equation}\label{Eqn:accum}
{P_0}{\mathcal{V}}_0^{r_0} = {P_h}{\mathcal{V}}_h^{r_0} = {P_l}{\mathcal{V}}_l^{r_0}
\end{equation}
where $r_0>1$ is a constant value,  $P_h$ and $P_l$ are the corresponding high and low threshold values of pressure in the accumulator. The high threshold value is chosen as $P_h=P_p$, where $P_p$ is the outlet oil pressure of the pump; ${\mathcal{V}}_h$ and ${\mathcal{V}}_l$ are the associated gas volume.
 Assuming that the relay switches off at a generic time instant $t$, the future pressure in the accumulator drops till it reaches the low threshold value, see Fig.~\ref{fig:accupre}. Thanks to the use of throttle valve, the inlet (outlet) flow of the accumulator can be assumed to be constant, then the pressure satisfies
 \begin{equation}\label{Eqn:accum_pre}
P_{t+\Delta t} = P_h\left( {\frac{{{{\mathcal{V}}_h}}}{{{{\mathcal{V}}_h} + {q_a} \cdot \Delta t}}} \right)^{r_0}
\end{equation}
where  $q_a$ is the inlet (outlet) flow of the accumulator.
Therefore, the working pressure during the two modes can be represented as
\begin{subequations}\label{Eqn:prew}
	\begin{align}
{P_s} = \left\{ \begin{array}{ll}
	{P_p},& \text{Mode1}\\
	{P_{t+\Delta t}},& \text{Mode2}
	\end{array} \right.
		\end{align}
Moreover, its derivative is defined as
	\begin{align}\label{Eqn:dot_prew}
{\dot P_s} = \left\{ \begin{array}{ll}
0,& \text{Mode1}\\
{\dot P_{t+\Delta t}},& \text{Mode2}
\end{array} \right.
\end{align}
\end{subequations}
\begin{figure}[h!t!]
	\begin{center}
		\includegraphics[width=8.4cm]{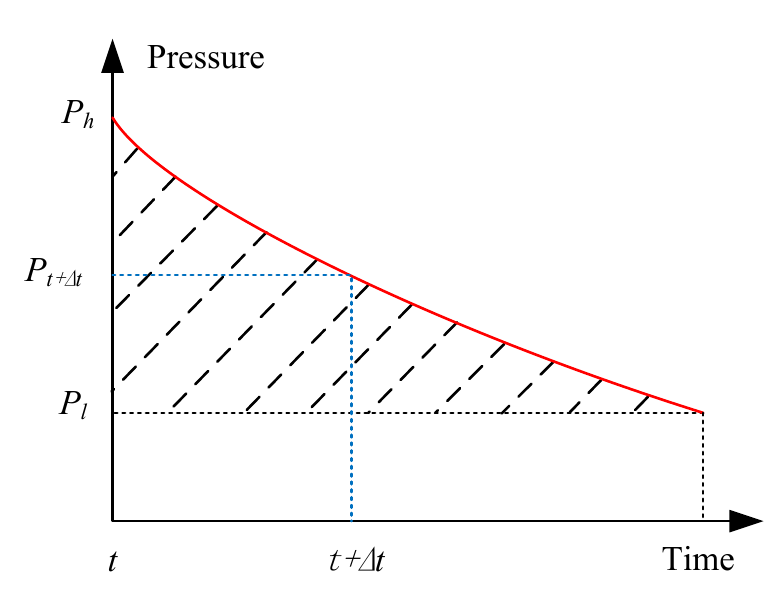}    
		\caption{The relationship between the pressure and the working time $\Delta t$  of the accumulator.}
		\label{fig:accupre}
	\end{center}
\end{figure}
By inspecting the motion function of the actuator, the force that drives the external load is given in the form of
\begin{equation}\label{Eqn:FL}
F_L=A_1P_1-A_2P_2
\end{equation}
Recalling the structure of the leg (see again Fig.~\ref{fig:leg}),  the output force $F_L$ of the hydraulic servo system acts as the control input for the inverted pendulum model of the leg dynamics, which is represented as (see ~\cite{racine2003control})
 \begin{equation}\label{Eqn:leg-dy}
J\ddot \varphi +mgr\sin (\varphi )= N({F_L} + {F_f})+\tau_{hm}
 \end{equation}
 \begin{figure}[h!t!]
 	\begin{center}
 		\includegraphics[width=8.4cm]{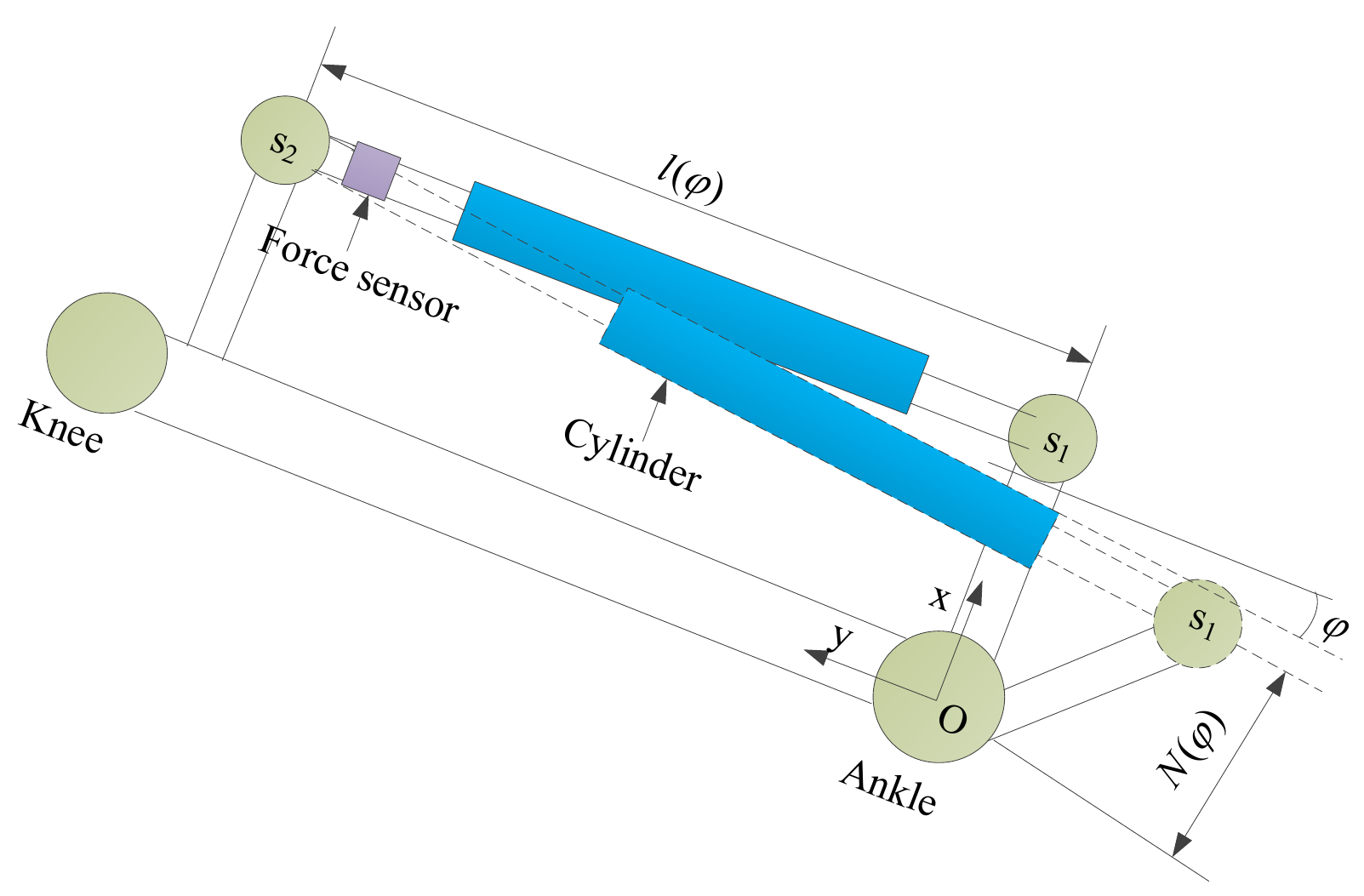}    
 		\caption{ Kinematical structure of the actuated shank.}
 		\label{fig:structure}
 	\end{center}
 \end{figure}
where $\varphi$ is the joint angle of the ankle, $J$ is the inertia of the shank, $N$ is the actuator moment arm (see Fig.~\ref{fig:structure}), $m$ is the mass of the shank, $r$ is the center position of the mass of the shank, $F_f$ is the friction force of the piston. $\tau_{hm}$ is the interaction torque between machine and human described by
\begin{equation}\label{Eqn:tau_hm}
\tau_{hm}=k_{p}(\varphi-\varphi_d)+k_{d}(\dot\varphi-\dot\varphi_d)
\end{equation}
where parameters $k_p$ and $k_d$ are constant values that amplify the differences of the angles and their velocities between machine and human.
The joint angle of the ankle establishes a relationship with the position of the piston, that is
\begin{equation}\label{Eqn:xc}
x_c=l(\varphi)-l_0-x_{c0}
\end{equation}
where $l_0$ is the initial length of the cylinder, $x_{c0}$ is the initial position of the piston, $l(\varphi)$ is the length of the cylinder that can be calculated according to the geometrical analysis of the shank as shown in Fig.~\ref{fig:structure}.
To this end, first note that the cylinder is placed between the ankle joint and the knee joint mounting at the points $s_1$ and $s_2$, and $O$ is the origin of the coordinates for the ankle joint. Denoting by $r_{s1}$ and $r_{s2}$ the distance from $s_1$ and $s_2$ to the coordinate origin, the length of the cylinder $l(\varphi)$ can be computed according to Law of cosines, that is
\begin{equation}\label{Eqn:l}
l(\varphi)=\sqrt{-2r_{s1}r_{s2}cos(\varphi-\theta_1-\theta_2)+r_{s1}^2+r_{s2}^2}
\end{equation}
 where $\theta_1=tan^{-1}(a_1/-b_1)$, $\theta_2=tan^{-1}(a_2/b_2)$, $a_1$, $a_2$ are the coordinate positions of $s_1$ and $s_2$ along the $x$ axis, while $b_1$ and $b_2$ are the coordinate positions  of $s_1$ and $s_2$ along the $y$ axis  respectively.
 The moment arm in \eqref{Eqn:leg-dy} is a function of $l(\varphi)$:
 \begin{equation}\label{Eqn:mom_arm}
N=r_{d1}\text{sin}(\text{cos}^{-1}(\frac{r_{d2}^2-l(\varphi)^2-r_{d1}^2}{-2l(\varphi)r_{d1}}))
 \end{equation}
The friction force in the hydraulic cylinder is not negligible and can be composed by Coulomb force, Viscous force, Stribeck effects and position dependent forces, see \cite{astolfi2007nonlinear,chantranuwathana2004adaptive}. However, it has been noticed that the Stribeck effects and the position dependent forces are usually very small, thus for simplicity, they are neglected in this case. The friction model of the piston is given by
 \begin{equation}\label{Eqn:friction}
 {F_f} = -F_C \operatorname{sgn} ({\dot \varphi}) - b{\dot \varphi}
 \end{equation}
 where $F_C$ and $b$ are the Coulomb friction term and the viscous friction coefficient respectively. \\
 The input to the hydraulic servo system is the electrical current, that affects directly the spool position through the mechanical time constant $\tau$ and the DC gain of the valve current $k_s$ to the spool position $x_v$. This relationship is described as
\begin{equation}\label{Eqn:xv}
\dot x_v=\frac{1}{\tau}(k_su-x_v)
\end{equation}
It is usually unavoidable in servo valve that the input is saturated by its maximum and minimum limits, see for instance \cite{yao2000adaptive}. The control saturation of the actuator can be described by
 \begin{equation}\label{Eqn:saturation}
 u= \left\{ \begin{array}{lc}
 u_{max}&u> u_{max}\\
 u&u_{min}\leq u\leq u_{max}\\
 u_{min} &u< u_{min}
 \end{array} \right .
 \end{equation}
 where $u_{min}\leq u_{max}$.
 If the computed value of the control variable   violates the constraint $\begin{bmatrix}
 u_{min}&u_{max}
 \end{bmatrix}$, control saturation occurs and the input is partially applied to \eqref{Eqn:xv}. The residual input that can not be implemented is defined as $\delta$ and is represented by \cite{chen2011adaptive}
 \begin{equation}\label{Eqn:resigual_input}
 \delta= \left\{ \begin{array}{lc}
 u_{max}-u&u> u_{max}\\
 0&u_{min}\leq u\leq u_{max}\\
 u_{min}-u &u< u_{min}
 \end{array} \right .
 \end{equation}
In view of \eqref{Eqn:ql_r}-\eqref{Eqn:xv}, one can write the system in the state-space form
\begin{equation}\label{Eqn:state}
\Sigma: \left\{ \begin{array}{l}
\ddot \varphi  = \frac{1}{J}(N({F_L} + {F_f}) - mgr\sin (\varphi )+\tau_{hm}) \\
{\dot F}_L = {n_1}{x_v} - {n_2}{{\dot x}_c} - {n_3}{F_L} + {n_4}{P_s}+n_5\dot P_s\\
{{\dot x}_v} = \frac{1}{\tau }({k_s}u - {x_v})
\end{array}  \right.
\end{equation}
where $${n_1} = \frac{{2\beta ({A_1} + {A_2}){K_q}}}{{{2{\mathcal{V}}_0+(A_1-A_2)x_c}}},\ {n_2} = \frac{{\beta {{({A_1} + {A_2})}^2}}}{{2{\mathcal{V}}_0+(A_1-A_2)x_c}},$$ $${n_3} = \frac{{2\beta (2{K_c} + {2C_{in}} + {C_{ex}})}}{{2{\mathcal{V}}_0+(A_1-A_2)x_c}},\ {n_4} =  \frac{{\beta (2{K_c} + 2{C_{in}} + {C_{ex}})({A_2} - {A_1})}}{{2{{\mathcal{V}}_0}+(A_1-A_2)x_c}},$$ $$n_5=\frac{A_2-A_1}{2}-\frac{{ {{({A_1} + {A_2})}^2}}}{{2(2{\mathcal{V}}_0+(A_1-A_2)x_c)}}.$$
\subsection{Control objective}
Considering a reference trajectory $\varphi_d$ for the angle variable of the ankle joint  and denoting $e_1$ the deviation of the value of the real angle and the desired one, i.e.,
\begin{equation}\label{Eqn:e_1}
	e_1=\varphi-\varphi_d
\end{equation}
the control goal for system \eqref{Eqn:state} is to drive $e_1$ to the origin.
 The control problem is trivial if the  system \eqref{Eqn:state}  is precise. In fact, the system \eqref{Eqn:state} is a simplified representation especially for the hydraulic servo model. Moreover, the mass $m$ and the inertia $J$ of the inverted pendulum model \eqref{Eqn:leg-dy} are usually difficult to be accurately measured. For these reasons, adaptive control algorithms  that estimate the constant and time-varying parameters, and compensates for the friction force, neglected nonlinearity, and time-varying pressure supply, can be used so as to achieve the  aforementioned control goal.
Note also that, the model \eqref{Eqn:state} can be partitioned into a hierarchical structure, where the high-level dynamic is
 \begin{subequations}\label{Eqn:hierar-model}
 \begin{align}\label{Eqn:high_m}
 \Sigma_H:
 \ddot \varphi  = \frac{1}{J}(N({F_L} + {F_f}) - mgr\sin (\varphi )+\tau_{hm}),
 \end{align}
while the low-level model is given as
  \begin{align}\label{Eqn:low_m}
 \Sigma_L: \left\{ \begin{array}{l}
 {\dot F}_L = {n_1}{x_v} - {n_2}{{\dot x}_c} - {n_3}{F_L} + {n_4}{P_s}+n_5\dot P_s\\
 {{\dot x}_v} = \frac{1}{\tau }({k_s}u - {x_v})
 \end{array}  \right.
 \end{align}
 \end{subequations}
{\color{black}
It can be noted that, \eqref{Eqn:hierar-model} coincides with the cascade model structure in which, the variable $F_L$ acts as the input to $\Sigma_H$ and as one of the outputs to $\Sigma_{L}$, establishing a direct interconnection between the two subsystems.  As previously described, in principle, a single-layer adaptive controller can be designed regarding~\eqref{Eqn:hierar-model} as a whole. However, in this way, the closed-loop control performance might be sensitive to model uncertainty and disturbance. In view of this, the proposed solution in this note is an adaptive Lyapunov-based approach with hierarchical cascade control structure. This control approach naturally corresponds to the model structure, with which, the influence of possible uncertainty and nonlinearity from $\Sigma_L$ to $\Sigma_H$ can be suppressed, e.g., via properly designed compensation algorithm at the low level. A brief description of the controller at the two levels is as follows.} The higher layer regulator is in charge of computing the required value of $F_L$ for \eqref{Eqn:high_m} such that the goal $E=e_1^{\top}e_1$ is minimized. The constant unknown parameters, e.g. the mass $m$, and the inertia $J$, are properly estimated, and the friction force is accounted for. At the lower layer, the regulator is designed for \eqref{Eqn:low_m} with the scope to track the force $F_L$ computed at the higher layer and to compensate for the possible uncertainty and nonlinearity.
The following standard assumption is assumed to be holding:
\begin{assumption}\label{assu:measure}
	The cylinder force $F_L$, ankle joint angle  $\varphi$, velocity $\dot \varphi$, and the accelerated velocity $\ddot \varphi$ are measurable.
\end{assumption}

\section{Design of the high-level Lyapunov-based adaptive controller}\label{sec:3}
In this section, the high-level Lyapunov-based controller is designed and the adaptive parameters estimation algorithms are introduced with the objective to drive $e_1$ to the origin.
\subsection{High-level Lyapunov-based controller}
First, denoting $x_1=\varphi$, $x_2=\dot\varphi$, rewrite the model \eqref{Eqn:high_m} as
 \begin{equation}\label{Eqn:high_m_x}
\Sigma_H: \left\{ \begin{array}{l}
\dot x_1=x_2\\
\dot x_2 = \frac{1}{J}(N({F_L} + {F_f}) - mgr\sin (x_1 )+\tau_{hm})
 \end{array}  \right.
\end{equation}
In view of \eqref{Eqn:e_1}, define a virtue control variable $v_1=-k_1e_1+\dot{\varphi_d}$, $k_1>0$ and denote \begin{equation}\label{eqn:e_2}
e_2=x_2-v_1=k_1e_1+(\dot\varphi-\dot\varphi_d)
\end{equation}
By applying the concept of back-stepping control (see \cite{chen1996backstepping}), we define a new system corresponding to \eqref{Eqn:high_m_x} in the following form:
 \begin{equation}\label{Eqn:e_3-b-model}
\Sigma_H: \left\{ \begin{array}{l}
\dot e_1=-k_1e_1+e_2\\
\dot e_2 = \frac{1}{J}(N({F_L} + {F_f}) - mgr\sin (x_1 )+\tau_{hm}) -\dot v_1
\end{array}  \right.
\end{equation}
In order to design a proper Lyapunov-based controller, as consider the following Lyapunov function
\begin{equation}\label{Lyap_1}
V_{H,1}=\frac{1}{2}\rho_1 e_1^2+\frac{1}{2}\rho_2Je_2^2
\end{equation}
where $\rho_1>0$, $\rho_2>0$ are the penalty weights associated with $e_1$ and $e_2$ respectively.
Taking the derivative of $V_{H,1}$ in \eqref{Lyap_1}, one has

\begin{equation*}
\begin{array}{rll}
&\dot V_{H,1}=\rho_1e_1(-k_1e_1+e_2)+\rho_2Je_2(\dot x_2-\dot v_1(t))\\
&=-k_{\rho_1}e_1^2-k_{\rho_2}e_2^2+\\
&\rho_2e_2\underbrace{(k_2e_2+\rho_{12}e_1+N(F_L+F_f)-mgrsin(x_1)+\tau_{hm}-J\dot v_1)}\\
&\ \ \ \ \ \ \ \ \ \ \ \ \ \ \ \ \ \ \ \ \ \ \ \ \ \ \ \ \ \ \ \ \ \ \ \ \ \ \ \
\ \ \ \ \ \text{set}=0
\end{array}
\end{equation*}
where $k_2>0$, $k_{\rho_1}=k_1\rho_1$, $k_{\rho_2}=k_2\rho_2$, and $\rho_{12}=\rho_1/\rho_2$.\\
Therefore, the control action can be selected as
\begin{equation}\label{Eqn:FL_I}
F_L=-\frac{1}{N}(k_2e_2+\rho_{12}e_1+NF_f+\tau_{hm}-mgrsin(x_1)-J\dot v_1)
\end{equation}
such that $\dot V_{H,1}=-k_{\rho_1}e_1^2-k_{\rho_2}e_2^2\leq0$.
where $k_{\rho_1}$ and $k_{\rho_2}$ are the tuning knobs taht define the decaying rate of the Lyapunov function $V_{H,1}$.
\begin{remark}
	In view of the definition of $\tau_{hm}$ in~\eqref{Eqn:leg-dy} and $e_2$ in~\eqref{eqn:e_2}, if parameter $k_1$ is selected such that $k_1=k_p/k_d$, one promptly has $e_2=\tau_{hm}/k_d$.  In this case, it is easy to see that
	the interaction torque $\tau_{hm}$ can also be minimized via the second term in the right-hand side of \eqref{Lyap_1}.
\end{remark}
\subsection{Adaptive design with the estimations of $m$ and $J$}\label{sec:m_J}
The input defined in \eqref{Eqn:FL_I} is highly model dependent  that  the mass $m$ and the inertia $J$ are assumed to be accurately measured which however in principle is nontrivial in practical situation. In view of this, an adaptive update algorithm is proposed to estimate $m$ and $J$. To this end, denote by $\hat m$ and $\hat J$ the estimated values of $m$ and $J$, and by $\tilde m=m-\hat m$ and $\tilde J=J-\hat J$ the corresponding estimation errors. Consider the following augmented Lyapunov function as
\begin{equation*}
V_{H,2}=V_{H,1}+\frac{1}{2}q_J\tilde{J}^2+\frac{1}{2}q_m\tilde{m}^2
\end{equation*}
where $q_J$ and $q_m$ are positive scalars.
A new control action in place of \eqref{Eqn:FL_I} is selected by substituting the corresponding estimated values for $m$ and $J$, i.e.,
\begin{equation}\label{Eqn:FL_I_s}
F_L=-\frac{1}{N}(k_2e_2+\rho_{12}e_1+NF_f+\tau_{hm}-\hat mgrsin(x_1)-\hat J\dot v_1)
\end{equation}
Taking the derivative of $V_{H,2}$ and applying the input \eqref{Eqn:FL_I_s}, one has
 \begin{equation*}
 \begin{array}{l}
\dot V_{H,2}=\\ =-k_{\rho_1}e_1^2-k_{\rho_2}e_2^2-\rho_2e_2(\tilde{m}grsin(x_1)+\tilde{J}\dot v_1)-q_J\tilde{J}\dot{\hat{J}}-q_m\tilde{m}\dot{\hat{m}}\\

=-k_{\rho_1}e_1^2-k_{\rho_2}e_2^2-\tilde{J}\underbrace{(\rho_2e_2\dot v_1+q_J\dot{\hat{J}})}-\tilde{m}\underbrace{(\rho_2e_2grsin(x_1)+q_m\dot{\hat{m}})}\\
\ \ \ \ \ \ \ \ \ \ \ \ \ \ \ \ \ \ \ \ \ \ \ \ \ \ \ \ \ \ \ \ \ \ \ \ \ \ \ \   \text{set}=0\ \ \ \ \ \ \ \ \ \ \ \ \  \ \ \ \ \ \ \ \ \ \ \ \ \text{set}=0
\end{array}
 \end{equation*}
 The parameter adaptation algorithm of $\hat J$ and $\hat m$ can be chosen as
 \begin{subequations}\label{Eqn:esti_update}
 	\begin{align}
 	\dot{\hat{J}}=&-\frac{1}{q_J}\rho_2e_2\dot v_1\\
 	\dot{\hat{m}}=&-\frac{1}{q_m}\rho_2e_2grsin(x_1)
 	\end{align}
 \end{subequations}
Thus, under the control choice \eqref{Eqn:FL_I_s} and the estimation updating rule \eqref{Eqn:esti_update}, it holds that
\begin{equation*}
\dot{V}_{H,2}=-k_{\rho_1}e_1^2-k_{\rho_2}e_2^2.
\end{equation*}
\subsection{Adaptive design with friction force estimation}
In the previous section~\ref{sec:m_J}, model \eqref{Eqn:friction} is included in the proposed control action for the objective of friction force compensation. Notice that, the parameters of the friction model can be obtained according to repeated experiments validation, see \cite{alleyne2000simplified}. In this work, instead, we propose an adaptation algorithm to estimate the parameters of friction force within the structure of \eqref{Eqn:friction}. In doing so, the cumbersome experiment validation can be avoided. To this end, define the corresponding estimated friction force as
 \begin{equation}\label{Eqn:friction_est}
{\hat F_f} = -{\hat F_C}\operatorname{sgn} ({\dot \varphi}) - \hat b{\dot \varphi}
\end{equation}
where $\hat F_C$ and $\hat b$ are the estimated values of  $F_C$ and $b$ respectively.
Considering the estimated friction force as one of the feedback terms, the control action is amended as
\begin{equation}\label{Eqn:FL_I_s_f}
F_L=-\frac{1}{N}(k_2e_2+\rho_{12}e_1+N\hat{F}_f+\tau_{hm}-\hat mgrsin(x_1)-\hat J\dot v_1)
\end{equation}
Accordingly, a new Lyapunov candidate is considered in the form
\begin{equation*}
V_{H,3}=V_{H,2}+\frac{1}{2}q_C\tilde F_C^2+\frac{1}{2}q_b\tilde b^2
\end{equation*}
where $\tilde F_C=F_C-\hat F_C$ and $\tilde b=b-\hat b$; $q_C$ and $q_b$ are positive scalars. \\
 Taking the derivative of $V_{H,3}$ and applying the control action~\eqref{Eqn:FL_I_s_f}, it holds that
 \begin{equation*}
\dot V_{H,3}=-k_{\rho_1}e_1^2-k_{\rho_2}e_2^2+\rho_2Ne_2(F_f-\hat F_f)-q_C\tilde F_C\dot{\hat F}_C-q_b\tilde b\dot{\hat b}
 \end{equation*}
In view of \eqref{Eqn:friction} and \eqref{Eqn:friction_est}, one has
 \begin{equation*}\begin{array}{ll}
\dot V_{H,3}=&-k_{\rho_1}e_1^2-k_{\rho_2}e_2^2-\rho_2Ne_2\tilde F_C\operatorname{sgn} ({\dot \varphi}) -\\ &\rho_2Ne_2\tilde b{\dot \varphi}
-q_C\tilde F_C\dot{\hat F}_C-q_b\tilde b\dot{\hat b}
\end{array}
\end{equation*}
By using the associative law of addition in polynomials, then one has
\begin{equation}\label{Eqn:lyap3}
\begin{array}{lll}
\dot V_{H,3}=&-k_{\rho_1}e_1^2-k_{\rho_2}e_2^2-\tilde F_C\underbrace{(\rho_2Ne_2\operatorname{sgn}{\dot \varphi}+q_C\dot{\hat F}_C)}-\\
&\ \ \ \ \ \ \ \ \ \ \ \ \ \ \ \ \ \ \ \ \ \ \ \ \ \ \ \ \ \ \ \ \ \ \ \ \ \ \ \ \ \ \ \ \ \text{set}=0\\
&\tilde b\underbrace{(\rho_2Ne_2\dot \varphi+q_b\dot{\hat b})}.\\
&\ \ \ \ \ \ \ \ \ \ \text{set}=0
\end{array}
\end{equation}
Therefore, the adaptation algorithm for parameters $\hat F_C$ and $\hat b$ are selected as follows:
\begin{subequations}\label{Eqn:fric_para_adapt}
	\begin{align}
	\dot{\hat F}_C=&-\frac{1}{q_C}\rho_2Ne_2\operatorname{sgn}{\dot \varphi}\label{Eqn:FC}\\
	\dot{\hat b}=&-\frac{1}{q_b}\rho_2Ne_2\dot \varphi
	\end{align}
\end{subequations}
{\color{black}
\begin{remark}
It is worth mentioning that, as the right-hand sides of the adaptation algorithm~\eqref{Eqn:FC} and $\dot e_2$ with~\eqref{Eqn:e_3-b-model} are piecewise affine on the state variables, the uniqueness solution condition of the resultant closed-loop system might not be verified, see~\cite{lasalle1960some}. Therefore, the usage of the Lyapunov stability theorem, such as LaSalle's theorem, for the stability analysis of~\eqref{Eqn:e_3-b-model} is not straightforward. Nevertheless, we show in the following that the corresponding convergence property can  be proven in the Filippov sense, see~\cite{shevitz1994lyapunov,dieci2011sliding}.
\end{remark}
\begin{theorem}\label{theo:1}
	With the control action~\eqref{Eqn:FL_I_s_f}, and the adaptation algorithms~\eqref{Eqn:esti_update} and~\eqref{Eqn:fric_para_adapt},  the closed-loop form of~\eqref{Eqn:e_3-b-model} is asympotically stable, that the tracking error $e_1,\,e_2$ converge to the origin as time goes to infinity.
\end{theorem}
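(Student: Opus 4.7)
The plan is to use the Lyapunov function $V_{H,3}$ already constructed in the derivation leading to \eqref{Eqn:lyap3}, interpret the closed-loop system as a differential inclusion in the Filippov sense (since both the adaptation law \eqref{Eqn:FC} and the feedback through $\hat F_f$, $F_f$ contain $\operatorname{sgn}(\dot\varphi)$), and then invoke a nonsmooth Lyapunov/Barbalat argument to conclude $e_1,\,e_2\to 0$. As a first step I would rewrite the closed-loop $(\dot e_1,\dot e_2,\dot{\hat J},\dot{\hat m},\dot{\hat F}_C,\dot{\hat b})$ in the compact form $\dot z \in \mathcal{K}[f](z)$, where $\mathcal{K}[f]$ denotes the Filippov convex-closure regularization of the right-hand side. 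The crucial observation is that the set-valued multiplier that replaces $\operatorname{sgn}(\dot\varphi)$ on the sliding surface $\{\dot\varphi=0\}$ is the same factor appearing in the friction model \eqref{Eqn:friction}, its estimate \eqref{Eqn:friction_est}, and the update law \eqref{Eqn:FC}, so the troublesome cross-term $-\rho_2 N e_2\tilde F_C\operatorname{sgn}(\dot\varphi)-q_C\tilde F_C\dot{\hat F}_C$ in $\dot V_{H,3}$ is cancelled set-valuedly.

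Second, I would compute the generalized gradient $\partial V_{H,3}$ (which is a classical gradient here, since $V_{H,3}$ is $C^1$ in all its arguments) and form the set-valued Lie derivative $\widetilde{\mathcal{L}}_f V_{H,3}$ in the sense of Shevitz--Paden. Inserting the selected control law \eqref{Eqn:FL_I_s_f} together with the adaptation laws \eqref{Eqn:esti_update} and \eqref{Eqn:fric_para_adapt}, the cancellations indicated in the derivation of \eqref{Eqn:lyap3} now hold for every selection, so that
\begin{equation*}
\max\widetilde{\mathcal{L}}_f V_{H,3}(z) \leq -k_{\rho_1}e_1^2-k_{\rho_2}e_2^2 \leq 0.
\end{equation*}
By the nonsmooth chain rule, $V_{H,3}(z(t))$ is therefore nonincreasing along every Filippov solution, hence bounded. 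From the definition of $V_{H,3}$ this immediately gives boundedness of $e_1,e_2,\tilde J,\tilde m,\tilde F_C,\tilde b$, and in turn of $\hat J,\hat m,\hat F_C,\hat b$, of $\varphi,\dot\varphi$ (via $\varphi_d,\dot\varphi_d$ bounded and \eqref{eqn:e_2}), and of the commanded $F_L$ through \eqref{Eqn:FL_I_s_f}.

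Third, I would conclude by a Barbalat-type argument. Integrating the Lyapunov inequality yields
\begin{equation*}
\int_0^{\infty}\bigl(k_{\rho_1}e_1^2(t)+k_{\rho_2}e_2^2(t)\bigr)\,dt \leq V_{H,3}(0)-\liminf_{t\to\infty}V_{H,3}(t)<\infty.
\end{equation*}
From the closed-loop equations \eqref{Eqn:e_3-b-model}, boundedness of all signals shown above implies that $\dot e_1$ is bounded and that $\dot e_2$ is essentially bounded (the only discontinuity enters as a bounded jump), so $e_1^2$ and $e_2^2$ are uniformly continuous on $[0,\infty)$. Barbalat's lemma then forces $e_1(t)\to 0$ and $e_2(t)\to 0$, which is the claim of Theorem~\ref{theo:1}.

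The main obstacle I expect is the nonsmooth step: justifying that the set-valued Lie derivative computation really yields the same cancellation as the formal derivation in \eqref{Eqn:lyap3}, and that Filippov solutions of the closed loop are well defined and complete. I would handle this by verifying that $f$ is measurable and locally essentially bounded (so that Filippov solutions exist), and by observing that $\operatorname{sgn}(\dot\varphi)$ appears linearly in the single factor $\tilde F_C$, so every element of the convex hull of the regularization produces the identical expression for $\widetilde{\mathcal{L}}_f V_{H,3}$; on the sliding surface $\dot\varphi=0$ the factor $e_2\operatorname{sgn}(\dot\varphi)$ is multiplied by an arbitrary element of $[-1,1]$ but is in both terms, preserving the cancellation. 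No further routine computation is required.
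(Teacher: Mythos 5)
Your proposal is correct and follows essentially the same route as the paper: treat the closed loop as a Filippov differential inclusion, show that along every Filippov solution the derivative of $V_{H,3}$ reduces to $-k_{\rho_1}e_1^2-k_{\rho_2}e_2^2$ because the $\operatorname{sgn}(\dot\varphi)$ cross-terms cancel against the adaptation law \eqref{Eqn:FC}, and then invoke Barbalat's lemma to get $e_1,e_2\to 0$. If anything, your treatment of the sliding surface is more careful than the paper's (which simply sets $\operatorname{sgn}(\dot\varphi)=0$ there), since you note that the same set-valued multiplier appears in $F_f$, $\hat F_f$, and $\dot{\hat F}_C$, so the cancellation holds for every selection, and you make explicit the boundedness and uniform-continuity steps needed for Barbalat.
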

\begin{proof}
	With~\eqref{Eqn:FL_I_s_f},~\eqref{Eqn:esti_update}, and~\eqref{Eqn:fric_para_adapt}, it is possible to write the closed-loop form, of which we highlight that, in addition to that of~\eqref{Eqn:FC}, the right-hand side of $\dot e$ is also piecewise affine on the state variables due to~\eqref{Eqn:FL_I_s_f}. To show the convergence of $e_1$, $e_2$, we have to check the monotonicity of the Lyapunov function $V_{H,3}$ for all the Filippov solutions. To proceed, we split the state variables into two regions: $\mathcal{Z}_1=\{z\in\mathbb{R}^6|e_2<-\dot \varphi_d\}$,  $\mathcal{Z}_2=\{z\in\mathbb{R}^6|e_2>-\varphi_d\}$, and a separating surface between $\mathcal{Z}_1$ and $\mathcal{Z}_2$, that is $\mathcal{Z}_3=\{z\in\mathbb{R}^6|e_2=-\varphi_d\}$, where $z=\begin{bmatrix}
	e_1&e_2&\hat J&\hat m&\hat F_C&\hat b
	\end{bmatrix}^{\top}$. In view of this, we first show that, for $z\in \mathcal{Z}_1,\, or\, \mathcal{Z}_2$, the term $sgn(\dot\varphi)$ can be replaced by constant values, which verifies the smoothness of the closed-loop system. Thus, the derivative of $V_{H,3}$ can be easily computed, i.e., $\dot V_{H,3}=-k_{\rho_1}e_1^2-k_{\rho_2}e_2^2$. For the case $z\in\mathcal{Z}_3$, i.e., when the sliding motion might occurs, $sgn(\dot \varphi)=0$ leads to $\hat F_C=\emph{constant}$, to the smoothness of the closed-loop system, and to $\dot V_{H,3}=-k_{\rho_1}e_1^2-k_{\rho_2}e_2^2$. To sum up, for almost all $t$, and $\forall \, z$,
	\begin{equation}\label{Eqn:lyap3_d}
	\dot V_{H,3}=-k_{\rho_1}e_1^2-k_{\rho_2}e_2^2,
	\end{equation}
	which implies that  $\dot V_{H,3}<0$ for $\begin{bmatrix}
		e_1&e_2
	\end{bmatrix}^{\top}\neq 0$ along all the Filippov solutions, thus the origin is globally stable in the Filippov sense.\\
	Also, according to~\cite{orlov2003switched}, from~\eqref{Eqn:lyap3_d}, it holds that, for all $t$, 	\begin{equation}\label{Eqn:lyap3_d-bound} V_{H,3}\leq \bar V,
	\end{equation}
	where $\bar V$ is an abitrarily large finite positive scalar.
	In view of~\eqref{Eqn:lyap3_d} and,~\eqref{Eqn:lyap3_d-bound}, recalling that $V_{H,3}$ is differentiable and the right-hand sidde term  $-k_{\rho_1}e_1^2-k_{\rho_2}e_2^2$ is smooth, by Barbalat's Lemma, the variables $e_1$ and $e_2$ converge to the origin respectively as time goes to infinity. \hfill $\square$
\end{proof}
}
 In view of Theorem~\ref{theo:1},  the control goal is achieved, i.e., the angle $\varphi$ converges to its desired value $\varphi_d$. Furthermore, in view of the definition of $e_2$, it follows that the angle velocity $\dot{\varphi}$ converges to $\dot \varphi_d$. Note that, it is not guaranteed that the estimated value $\hat m$, $\hat J$, $\hat F_C$,  and $\hat b$ will converge to their true values. It is because their true values are usually unknown, thus the differences with respect to their estimated values are not available at all times and can not guarantee to be exactly compensated via adding feedback error terms in the corresponding adaptation laws. Even so, this fact does not make any negative influence on the convergence of $e_1$ and $e_2$.
The desired trajectory of $F_L$ to be tracked by the low-level controller is given as
\begin{equation}\label{Eqn:FL_d}
F_{L,d}=-\frac{1}{N}(k_2e_2-\rho_{12}e_1+N\hat{F}_f+\tau_{hm}-\hat mgrsin(x_1)-\hat J\dot v_1)
\end{equation}
 $F_{L,d}$ is a reasonable choice for the reference signal of the low layer for the reason that $e_1$ and $e_2$ can always be measured in real-time and the unknown parameters are replaced with estimated ones that are adaptively updated by the proper design at the high layer.

\section{Devise of the low-level Lyapunov-based neural network adaptive regulator}\label{sec:4}
In this section,  the low-level Lyapunov-based  regulator with the neural-network based estimation algorithm  is designed for \eqref{Eqn:low_m}, with the objective to track the reference signal $ F_{L,d}$ computed at the high layer, and to compensate for unknown time-varying parameters and nonlinear piecewise discontinuous function.
\subsection{Low-level Lyapunov-based regulator}
 First note that the friction force model contains the $\operatorname{sgn}(\dot \varphi)$ term, thus it is not derivable in the domain  $\dot \varphi=0$.  For this reason, we define by $\dot {\hat F}_{f}$ an approximated derivative function  associated with $ {\hat F}_{f}$, i.e.,
\begin{equation}\label{Eqn:F_f_d}
\dot{\hat F}_{f}= \left \{ \begin{array}{ll}
\dot {\hat F}_f^+, &\dot \varphi\geq \epsilon_0\\
\frac{-\hat F_C-\hat b\epsilon_0}{\epsilon_0}, &-\epsilon_0\leq\dot \varphi<\epsilon_0\\
		\dot {\hat F}_f^-,& \dot \varphi< \epsilon_0
	\end{array}\right .
\end{equation}
where $\dot {\hat F}_f^+=-\hat F_C-\hat b\ddot \varphi-\dot {\hat b} \dot \varphi$, $\dot {\hat F}_f^-=\hat F_C-\hat b\ddot \varphi-\dot{\hat b} \dot \varphi$, $\epsilon_0$ is a small positive scalar, $\ddot \varphi$ is the second derivative of $\varphi$. 
%
\begin{assumption}\label{rem:f_hat}
  Assume that $\epsilon_0$ is chosen small enough such that the friction force $\hat F_f$ can be properly approximated by $\int^{t}\dot {\hat {F}}_f$, i.e.,  $\hat F_f=\int^{t}\dot {\hat F}_f$.

\end{assumption}
%
In order to derive the model to be used at the low level,
 we denote $x_3=F_L$,  and rewrite  the system \eqref{Eqn:low_m}  as
 \begin{equation}\label{Eqn:low_m_3}
\left\{ \begin{array}{l}
\dot x_3=n_1x_v-n_2\dot x_c-n_3 x_3+n_4P_s+n_5\dot P_s\\
{{\dot x}_v} = \frac{1}{\tau }({k_s}u - {x_v})
\end{array}  \right.
\end{equation}
Thanks to the definition of $\dot{\hat F}_{f}$, we define by $e_3$ the deviation of $x_3$ and the desired one, i.e., $e_3=x_3-F_{L,d}$, then it is possible to write \eqref{Eqn:low_m_3} as
 \begin{equation}\label{Eqn:low_m_3_e}
\left\{ \begin{array}{l}
\dot e_3=n_1(x_v+f_4)\\
{{\dot x}_v} = \frac{1}{\tau }({k_s}u - {x_v})
\end{array}  \right.
\end{equation}
where $f_4=\frac{1}{n_1}(-n_2\dot x_c-n_3x_3-\dot {\tilde F}_{L,d}-\dot{\hat F}_f+n_{14}P_s+n_{15}\dot P_s)$, $ {\tilde F}_{L,d}= {F}_{L,d}+\hat F_f$, $ n_{14}=\frac{n_4}{n_1}$, $ n_{15}=\frac{n_5}{n_1}$.
In the following, the aim is to design the Lyapunov-based controller for the low-level system \eqref{Eqn:low_m_3_e} with the objective of minimizing the deviation of $F_L$ and $F_{L,d}$. To this end,
 first consider the following Lyapunov candidate
\begin{equation*}
V_{L,1}=\frac{\rho_3}{2n_1}e_3^2
\end{equation*}
where $\rho_3>0$.
Taking the derivative of $\dot V_{L,1}$ leads to
\begin{equation}\label{Eqn:lypu_d_L1}
\dot V_{L,1}=-k_{\rho_3}e_3^2+\rho_3e_3(k_3e_3+x_v+f_4)
\end{equation}
where $k_3$ is chosen as a positive scalar, and $k_{\rho_3}=k_3\rho_3$.\\
As the input variable $u$ did not appear in \eqref{Eqn:lypu_d_L1}, the second derivative of $V_{L,1}$ might be needed if back-stepping method is used at this level. However, note that the term $\dot {\hat F}_f$ is not continuous by the definition in \eqref{Eqn:F_f_d}, moreover, $n_5 P_s$ is piecewise at the switching time instant from Mode 1 to Mode 2. For this reason, the back-stepping method is not advisable in this case.\\
To solve this problem, stabilizing $e_3$,  the spool position in~\eqref{Eqn:lypu_d_L1} has to be set equal to
\begin{equation}\label{Eqn:xv_input}
x_v=-(k_3e_3+f_4)
\end{equation}
 As the valve dynamics ${{\dot x}_v} = \frac{1}{\tau }({k_s}u - {x_v})$ is linear, stable, and fast, it is reasonable to assume that the input $u$ is proportional to the spool position $x_v$, i.e., $\dot x_v=0$, that is
\begin{equation}\label{Eqn:u_kv}
u=\frac{1}{k_s}x_v
\end{equation}
Substituting \eqref{Eqn:u_kv} into \eqref{Eqn:xv_input} gives the choice of the input variable, i.e.,
\begin{equation}\label{Eqn:lypu_d_L1_u}
u=-\frac{1}{k_s}(k_3e_3+f_4)
\end{equation}
Considering the real dynamics ${{\dot x}_v} = \frac{1}{\tau }({k_s}u - {x_v})$, with~\eqref{Eqn:lypu_d_L1_u},  \eqref{Eqn:lypu_d_L1} can be rewritten as
\begin{equation*}
\begin{array}{l}
\dot V_{L,1}=-k_{\rho_3}e_3^2+\rho_3e_3(k_3e_3+(k_su-\tau\dot x_v))
\end{array}
\end{equation*}
In view of \eqref{Eqn:lypu_d_L1_u}, one has
\begin{equation*}
\begin{array}{ll}
\dot V_{L,1}&=-k_{\rho_3}e_3^2-\rho_3e_3\tau \dot x_v
\end{array}
\end{equation*}
\subsection{Adaptive neural-network based  estimation of $f_4$}
Notice that the term $f_4$ in \eqref{Eqn:low_m_3_e} contains nonlinearity, discontinuity, and possible uncertain time-varying parameters (e.g., $n_1$, $n_2$, $n_3$), thus it is nontrivial to be accurately measured or estimated with adaptation algorithms similar to the ones developed in Section~\ref{sec:3}. To properly estimate and compensate for $f_4$, a multi-layer neural network including both continuous RBF function and jump approximation basis function is proposed in this work.To this scope, it is highlighted that  the discontinuity jump points of $\dot {\hat {F_f}}$ is at $\dot {\varphi}=\epsilon_0\, \text{and}\, -\epsilon_0$, while the ones of $P_s$ and $\dot P_s$ are at $P_s=P_l$. With above information, it is possible to write $f_4$ in terms of the proposed neural network in the following form:
\begin{equation}\label{Eqn:neural_network}
f_4=W_1^{\top}h(Z)+W_2^{\top}\phi(Z+c_1)+W_3^{\top}\phi(Z+c_2)+\epsilon_f(Z)
\end{equation}
where $c_1=\begin{bmatrix}
0& \epsilon_0& 0&P_l
\end{bmatrix}^{\top}$, $c_2=\begin{bmatrix}
0& -\epsilon_0& 0&P_l
\end{bmatrix}^{\top}$, $W_1\in \mathbb{R}^{n_1\times 1}$, $W_2\in \mathbb{R}^{n_2\times 1}$, and $W_3\in \mathbb{R}^{n_2\times 1}$, are the ideal output weight vectors, $n_1$ and $n_2$ are the corresponding numbers of the neurons, $Z=\begin{bmatrix}x_3&\dot \varphi&\dot{ \tilde F}_{L,d}&P_s\end{bmatrix}^{\top}$ is the input vector, $\epsilon_f(Z)$ is the bounded neural network approximation error, and $h(Z)=\begin{bmatrix}
h_1(Z)&\cdots&h_{n_1}(Z)
\end{bmatrix}^{\top}$ is the activation function, where $h_i(Z)$ is selected as
\begin{equation*}
h_i(Z)=\frac{(Z-\mu_i)^{\top}(Z-\mu_i)}{\upsilon_i^2}
\end{equation*}
for $i=1,\cdots,n_1$, where $\mu_i=\begin{bmatrix}
\mu_{i,1}&\mu_{i,2}&\mu_{i,3}
\end{bmatrix}^{\top}$ and $\upsilon_i$ are the center and width of the Gaussian transfer function.\\
The jump approximation basis function is defined as $\phi(Z)=\begin{bmatrix}
\phi_1(Z)&\cdots&\phi_{n_2}(Z)
\end{bmatrix}^{\top}$, where for $i=1,\cdots,n_2$
\begin{equation*}
\phi_i(Z)=\left\{ \begin{array}{ll}
0,&\text{for}\, Z<0,\\
(1-e_c^{-Z})^i,&\text{for}\,Z\geq 0\\
\end{array}\right.
\end{equation*}
where $e_c$ is the mathematical constant.\\
Therefore, the control variable \eqref{Eqn:lypu_d_L1_u} is replaced by
\begin{equation}\label{Eqn:low_u_nnn}
u=-\frac{1}{k_s}(k_3e_3+\hat W_1^{\top}h(Z)+\hat W_2^{\top}\phi(Z+c_1)+\hat W_3^{\top}\phi(Z+c_2))
\end{equation}
where $\hat W_i$ is the estimated values of $W_i$, for $i=1,2,3$.
Denoting by $\tilde W_i=W_i-\hat W_i$, $i=1,2,3$, the corresponding estimation errors, under input \eqref{Eqn:low_u_nnn}, consider the following Lyapunov candidate
\begin{equation*}
V_{L,2}=V_{L,1}+\frac{1}{2}\tilde W^{\top}\Gamma^{-1}\tilde W
\end{equation*}
where $\tilde W=\begin{bmatrix}
\tilde W_1^{\top}&\tilde W_2^{\top}&\tilde W_3^{\top}
\end{bmatrix}^{\top}$, $\Gamma=\text{diag}\{\Gamma_1,\Gamma_2,\Gamma_3\}$ is a positive definite matrix.
We also denote $\hat W=\begin{bmatrix}
\hat W_1^{\top}&\hat W_2^{\top}&\hat W_3^{\top}
\end{bmatrix}^{\top}$, $\chi(Z)=\begin{bmatrix}
h(Z)^{\top}&\phi(Z+c_1)^{\top}&\phi(Z+c_2)^{\top}
\end{bmatrix}^{\top}$.  Taking the derivative of $V_{L,2}$, it holds that
$$\begin{array}{ll}\dot V_{L,2}=\\
=-k_{\rho_3}e_3^2-\rho_3e_3\tau \dot x_v+\rho_3e_3(f_4-\hat W^{\top}\chi(Z))-\tilde W^{\top}\Gamma^{-1}\dot{\hat W}\\
=-k_{\rho_3}e_3^2-\rho_3e_3\tau \dot x_v+\rho_3e_3\epsilon_f(Z)+\rho_3e_3(\tilde W^{\top}\chi(Z)-
\tilde W^{\top}\Gamma^{-1}\dot{\hat W}\\
=-k_{\rho_3}e_3^2-\rho_3e_3\tau \dot x_v+\rho_3e_3\epsilon_f(Z)+\tilde W^{\top}(\rho_3e_3\chi(Z)-\Gamma^{-1}\dot{\hat W})
\end{array}$$
Therefore, the adaptation algorithm for $\hat W$ can be chosen as
\begin{equation}\label{Eqn: W_H}
\dot{\hat W}=\Gamma(\rho_3e_3\chi(Z)-\sigma\hat W)
\end{equation}
where $\sigma=\text{diag}\{\sigma_1,\sigma_2,\sigma_3\}$, is a robust matrix.\\
\subsection{Compensation for the control saturation}
Due to the control saturation described in \eqref{Eqn:saturation}, in view of the definition of $\delta$ in \eqref{Eqn:resigual_input}, from \eqref{Eqn:low_u_nnn}, the final control action applied to the low-level regulator is selected as
\begin{equation}\label{Eqn:low_u_nnn_final}
u=-\frac{1}{k_s}(k_3e_3+\hat W^{\top}\chi(Z))+\delta+\xi
\end{equation}
where $\xi$ is introduced to compensate for the control saturation effect and defined in the following auxillary system
\begin{equation}\label{Eqn:xi_aux}
\left\{ \begin{array}{ll}
\dot \xi=-k_{\xi}\xi-\frac{|\rho_3e_3k_s(\delta+\xi)|+0.5\delta^2}{\xi}-\delta,& |\xi|> \mu\\
\dot \xi=0,&|\xi|\leq \mu
\end{array}  \right.
\end{equation}
where $k_{\xi}>0$, and $\mu$ is a positive scalar.\\
In order to guarantee the robustness property with control action~\eqref{Eqn:low_u_nnn_final}, (see the following Section~\ref{sec:5}), the parameter $k_{\xi}$ is assumed to be selected greater than $\frac{1}{2}$.
\section{Properties of the closed-loop system}\label{sec:5}
In this section, the closed-loop robustness properties are discussed. To this end, in view of the cascade structure of the controller, the stability at the lower layer is considered first,  the one at the higher layer is then analyzed given the theoretical result at the lower layer.
The following standing assumption is concerned:
\begin{lemma}\label{Lemma1} \cite{selmic1997neural}
Let $f:\mathcal{X}\rightarrow R$ be any bounded function that is continuous and analytic on convex set $\mathcal{X}$ except at point $x=c$, then there exist a function
\begin{equation}\label{eqn:fx_appro}
\hat f(x)=g(x)+\sum_{i=0}^{T}a_i\phi_i(Z-c)
\end{equation}
such that
$$|f(x)-\hat f(x)|\leq \bar {\epsilon}$$
where $g(x)$ is a continuous function, $a_i$ is a scalar, and $\bar {\epsilon}$ is a positive scalar.
\end{lemma}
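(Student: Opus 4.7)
The plan is to split $f$ into a ``jump part'' localized at $c$ plus a continuous remainder, and then approximate each piece with an appropriate tool. The motivation is that the basis functions $\phi_i(\cdot)$ introduced earlier in the paper are designed precisely to vanish for negative arguments and rise smoothly for positive ones, so a weighted combination $\sum_{i=0}^{T} a_i \phi_i(x-c)$ can encode both the size of the jump at $c$ and, via the powers $i$, a truncated expansion of the one-sided behavior of $f$ near $c$.

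First I would characterize the discontinuity. Since $f$ is bounded, continuous, and analytic on $\mathcal{X}\setminus\{c\}$, the one-sided limits $f(c^-)$ and $f(c^+)$ exist and are finite, so the jump magnitude $J=f(c^+)-f(c^-)$ is well-defined. I would then introduce a ``step-carrying'' auxiliary function $s(x-c)=J\cdot \mathbf{1}_{x\geq c}$ and observe that $f(x)-s(x-c)$ is continuous at $c$ (both one-sided limits coincide) while remaining analytic off $c$ on the convex set $\mathcal{X}$. Hence $f-s$ is a continuous function on the closure of $\mathcal{X}$ (after suitable redefinition at $c$).

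Next I would approximate the discontinuous part $s(x-c)$ by $\sum_{i=0}^{T} a_i \phi_i(x-c)$. Because each $\phi_i$ is zero to the left of $c$ and of the form $(1-e^{-(x-c)})^i$ to the right, their span on $x\geq c$ contains, by a change of variable $y=1-e^{-(x-c)}\in[0,1)$, the algebra of polynomials in $y$ on a compact interval. By the Stone--Weierstrass theorem, any bounded continuous function on $[c,\max \mathcal{X}]$ that vanishes on $\{x<c\}$ can be uniformly approximated, to accuracy $\bar\epsilon/2$, by a finite combination of the $\phi_i(x-c)$ with appropriate coefficients $a_i$. In particular this covers the step $s(x-c)$ after smoothing its right-continuous right-hand asymptote.

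Finally, the remainder $f(x)-\sum_{i=0}^{T}a_i\phi_i(x-c)$ is continuous on the compact convex set $\overline{\mathcal{X}}$ by construction, so a second application of a universal-approximation result (Stone--Weierstrass or any standard continuous-function approximator that $g$ is chosen from, e.g.\ an RBF expansion) furnishes a continuous $g(x)$ with sup-norm error at most $\bar\epsilon/2$. Adding the two bounds gives $|f(x)-\hat f(x)|\leq \bar\epsilon$ as claimed. The main obstacle I expect is the second step: carefully verifying that the particular family $\{\phi_i\}$ defined in the paper (powers of $1-e^{-Z}$) is dense enough in the space of right-continuous functions vanishing for $Z<0$; once the change of variable to polynomials on $[0,1)$ is made, Stone--Weierstrass closes the argument, but attention is needed at the endpoint $x\to\infty$ where $y\to 1$ and at the jump itself where only a one-sided limit can be matched.
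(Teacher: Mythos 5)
The paper offers no proof of this lemma at all---it is imported verbatim from \cite{selmic1997neural}---so there is no in-paper argument to compare against; your reconstruction is essentially the argument of that reference (peel off the jump at $c$, then handle the continuous remainder with a standard universal approximator), and it is sound in outline. Two remarks. First, your opening step asserts that the one-sided limits $f(c^{-})$ and $f(c^{+})$ exist because $f$ is bounded, continuous and analytic off $c$; this does not follow from those hypotheses (consider $\sin(1/(x-c))$, which is bounded and analytic on each side of $c$ yet has no one-sided limits there). Existence of the one-sided limits is part of the definition of piecewise continuity in the cited source and should be stated as a hypothesis rather than derived---as written, the lemma's hypotheses are too weak for your decomposition $f=s+(f-s)$ to even be defined. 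Second, your Stone--Weierstrass detour for the jump part is unnecessary: with the paper's definition of the basis, $\phi_0(Z-c)$ equals $1$ for $Z\geq c$ and $0$ for $Z<c$, i.e.\ it \emph{is} the Heaviside step, so taking $a_0=J$ and $a_i=0$ for $i\geq 1$ reproduces $s(x-c)$ exactly; the higher-order terms $\phi_i$, which under $y=1-e^{-(Z-c)}$ become monomials $y^i$ on a compact subinterval of $[0,1)$, are only needed if one restricts $g$ to a particular continuous approximator class (as the paper does, an RBF network) and wants to absorb the one-sided behavior near $c$ into the jump basis. Your caveats about needing compactness of $\mathcal{X}$ for uniform approximation, and about the change of variable degenerating as $y\to 1$, are well placed; note also that as literally stated (with $g$ an arbitrary continuous function) the lemma is nearly vacuous, and the real content---which your proof correctly supplies---is that $g$ can be drawn from a dense family of continuous approximators.
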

\begin{theorem}
	For piecewise discontinuous function $f_4$ defined in~\eqref{Eqn:low_m_3_e}, there exist a function $f_4$ of type
	\begin{equation*}
\hat f_4=W_1^{\top}h(Z)+W_2^{\top}\phi(Z+c_1)+W_3^{\top}\phi(Z+c_2)
	\end{equation*}
	such that
	\begin{equation}
	|\epsilon_f(Z)|\leq \bar {\epsilon}.
	\end{equation}
\end{theorem}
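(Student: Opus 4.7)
The plan is to reduce the theorem to an iterated application of Lemma~\ref{Lemma1}. The first step is to identify and localise all the discontinuities of $f_4$. By inspection of $f_4 = \tfrac{1}{n_1}(-n_2\dot x_c - n_3 x_3 - \dot{\tilde F}_{L,d} - \dot{\hat F}_f + n_{14}P_s + n_{15}\dot P_s)$, together with definitions~\eqref{Eqn:F_f_d} and~\eqref{Eqn:dot_prew}, the only discontinuities lie in the terms $\dot{\hat F}_f$ (jumps at $\dot\varphi = \pm\epsilon_0$) and in $P_s,\dot P_s$ (jumps when the accumulator switches at $P_s = P_l$). All the remaining contributions to $f_4$ are smooth functions of $Z=[x_3,\dot\varphi,\dot{\tilde F}_{L,d},P_s]^{\top}$. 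This matches exactly the shift vectors $c_1 = [0,\epsilon_0,0,P_l]^{\top}$ and $c_2 = [0,-\epsilon_0,0,P_l]^{\top}$ chosen in~\eqref{Eqn:neural_network}.

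Next, I would decompose $f_4$ as $f_4(Z) = g_0(Z) + J_1(Z) + J_2(Z)$, where $g_0$ collects all smooth contributions and $J_1$, $J_2$ carry the jumps located at the hyperplanes $Z+c_1 = 0$ and $Z+c_2=0$ respectively. Applying Lemma~\ref{Lemma1} twice — once to extract the jump at the surface associated with $c_1$ and once at that associated with $c_2$ — yields a decomposition of the form $f_4(Z) = \bar g(Z) + \sum_i a_i\phi_i(Z+c_1) + \sum_j b_j\phi_j(Z+c_2) + \epsilon_1(Z)$, with $\bar g$ continuous and analytic on its domain and $|\epsilon_1(Z)|$ uniformly bounded by an arbitrarily small constant. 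I will then invoke the universal approximation property of Gaussian RBF networks to write $\bar g(Z) = W_1^{\top}h(Z) + \epsilon_2(Z)$ on the compact operating set of $Z$, with $|\epsilon_2(Z)| \le \bar\epsilon_2$. Collecting the coefficients of the jump basis functions as $W_2$ and $W_3$ completes the representation~\eqref{Eqn:neural_network} with total residual $\epsilon_f(Z) = \epsilon_1(Z) + \epsilon_2(Z)$, so that $|\epsilon_f(Z)| \le \bar\epsilon$ for $\bar\epsilon = \bar\epsilon_1 + \bar\epsilon_2$.

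The main obstacle I anticipate is that Lemma~\ref{Lemma1} is stated in a scalar, single-point setting, whereas here the discontinuities live on codimension-one hyperplanes of a multidimensional input space, and there are two such surfaces rather than one. Overcoming this requires either (i) an iterative argument in which, after removing the first jump, the residual function is still analytic on the complement of the second surface so that the lemma applies again, or (ii) invoking the straightforward multivariate extension of Selmic--Lewis in which the shifted argument $Z+c_k$ makes $\phi_i(Z+c_k)$ jump precisely on the hyperplane encoded by $c_k$. A subsidiary technical point is that $f_4$ depends on quantities such as $\dot{\tilde F}_{L,d}$ and the (possibly unbounded) state $x_3$, so the approximation result has to be restricted to a compact set of operation; this is standard for RBF approximation and consistent with the boundedness of physical signals enforced elsewhere in the paper. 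With those observations in place, the existence of $W_1,W_2,W_3$ delivering the claimed uniform bound $|\epsilon_f(Z)|\le\bar\epsilon$ follows directly.
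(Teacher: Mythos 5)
Your proposal follows essentially the same route as the paper: the paper's proof is a two-line remark that one applies Lemma~\ref{Lemma1}, substitutes the RBF network $W_1^{\top}h(Z)$ for the continuous part $g(x)$, and extends the single jump point $Z=c$ to the multiple points $-c_1$ and $-c_2$. Your version is the same argument spelled out carefully — in particular, your explicit identification of the jump locations, the iterated application of the lemma, the appeal to RBF universal approximation on a compact operating set, and your flagging of the scalar-versus-multivariate gap in Lemma~\ref{Lemma1} are all points the paper silently assumes rather than proves, so your write-up is, if anything, more complete than the original.
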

\begin{proof}
	In view of the result of Lemma~\ref{Lemma1}, substituting $g(x)$~\eqref{eqn:fx_appro} for continuous RBF function $W_1^{\top}h(Z)$ and extending discontinuous jump point $Z=c$ to multiple ones, e.g., $Z=-c_1,\, c_2$, leads to the result. \hfill $\square$
\end{proof}
The following result can be stated for the low-level controller:
\begin{theorem}\label{thm:l_s}   
	Under Assumptions~\ref{assu:measure} and \ref{rem:f_hat}, with the control action \eqref{Eqn:low_u_nnn_final} and the adaptation algorithm \eqref{Eqn: W_H}, the derivative of the Lyapunov function $V_{L,3}=V_{L,2}+\frac{1}{2}\xi^2$ converges to zero and  the variables $e_3$ and $\tilde{W}_i$, $i=1,2,3$, are uniformly ultimately bounded within the set, i.e.,
	\begin{subequations}\label{Eqn:state_b}
		\begin{align} |e_3|\leq& \sqrt{\frac{\alpha_1}{k_{30}}}\label{Eqn:e_3-b}\\
 \|\tilde W_1\|\leq& \sqrt{\frac{2\alpha_1}{\sigma_1}}		\label{Eqn:W_f_t_b}\\
\|\tilde W_2\|\leq& \sqrt{\frac{2\alpha_1}{\sigma_2}}\label{Eqn:W_2_t_b} \\
 \|\tilde W_3\|\leq& \sqrt{\frac{2\alpha_1}{\sigma_3}}\label{Eqn:W_3_t_b}
		\end{align}
	\end{subequations}
	
	where  $\alpha_1$ is defined in \eqref{Eqn:VL3_B}, and $\|\cdot\|$ is the Euclidean norm.
\end{theorem}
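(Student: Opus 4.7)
The plan is to compute $\dot V_{L,3}$ along the closed loop, dominate every indefinite cross term by Young's inequality or by the tailored design of the auxiliary $\xi$--system, and then deduce the bounds \eqref{Eqn:state_b} from a standard uniform ultimate boundedness argument. I would start from the expression for $\dot V_{L,2}$ already assembled just above \eqref{Eqn: W_H}, but keep the extra saturation contribution coming from the full control \eqref{Eqn:low_u_nnn_final}: substituting $u$ into $\dot V_{L,1}$ produces an additional term $\rho_3 e_3 k_s(\delta+\xi)$ that the simpler analysis did not have. The adaptation law \eqref{Eqn: W_H} with $\sigma$--modification makes the weight contribution collapse to $\tilde W^{\top}\sigma \hat W$, and using $\hat W = W-\tilde W$ together with Young's inequality gives
\[
\tilde W^{\top}\sigma \hat W \;\le\; -\tfrac{1}{2}\tilde W^{\top}\sigma \tilde W + \tfrac{1}{2}W^{\top}\sigma W,
\]
which is how the $\|\tilde W_i\|$ bounds in \eqref{Eqn:W_f_t_b}--\eqref{Eqn:W_3_t_b} are eventually generated.

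Next I would add $\xi\dot\xi$. In the regime $|\xi|>\mu$ the design of $\dot\xi$ in \eqref{Eqn:xi_aux} is exactly tailored so that the cross term cancels: since $\rho_3 e_3 k_s(\delta+\xi)\le |\rho_3 e_3 k_s(\delta+\xi)|$, combining with $\xi\dot\xi$ yields
\[
\rho_3 e_3 k_s(\delta+\xi) + \xi\dot\xi \;\le\; -k_\xi \xi^2 - \tfrac{1}{2}\delta^2 - \xi\delta,
\]
and a further Young inequality $-\xi\delta \le \tfrac{1}{2}\xi^2 + \tfrac{1}{2}\delta^2$ absorbs the $\delta$ terms, leaving $-(k_\xi-\tfrac{1}{2})\xi^2$; this is precisely why $k_\xi>\tfrac{1}{2}$ was required. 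In the dead-zone $|\xi|\le\mu$ we have $\dot\xi=0$ and the contribution is a constant bounded residual that can be absorbed into the $\alpha_1$ constant. The remaining indefinite pieces $\rho_3 e_3 \epsilon_f(Z)$ and $-\rho_3 e_3 \tau \dot x_v$ are handled by Young's inequality, using $|\epsilon_f(Z)|\le\bar\epsilon$ from the preceding theorem and the boundedness of $\dot x_v$ (which follows from the stability and linearity of the valve dynamics with bounded input).

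Collecting all contributions leads to a bound of the form
\[
\dot V_{L,3} \;\le\; -k_{30}\, e_3^2 - \tfrac{1}{2}\tilde W^{\top}\sigma \tilde W - (k_\xi-\tfrac{1}{2})\xi^2 + \alpha_1,
\]
with $k_{30} := k_{\rho_3}-\rho_3^2$ (taken positive by design) and $\alpha_1$ collecting $\tfrac{1}{2}\bar\epsilon^2$, $\tfrac{1}{2}\tau^2\bar{\dot x}_v^{\,2}$ and $\tfrac{1}{2}W^{\top}\sigma W$. From this inequality $\dot V_{L,3}<0$ outside the compact set where \emph{any} of the bounds \eqref{Eqn:e_3-b}--\eqref{Eqn:W_3_t_b} is violated, which immediately delivers the claimed UUB properties; convergence of $\dot V_{L,3}$ to zero then follows by a Barbalat-type argument, since $V_{L,3}$ is nonincreasing on a bounded trajectory and its derivative is uniformly continuous.

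The main obstacle I anticipate is the saturation-compensation step: the terms arising from $\delta$ and $\xi$ must cancel cleanly against the $|\rho_3 e_3 k_s(\delta+\xi)|$ piece built into $\dot\xi$, and the two-case definition \eqref{Eqn:xi_aux} must be handled so that the Lyapunov inequality holds in both regimes (in particular, a rigorous treatment near $|\xi|=\mu$ requires a Filippov-type or differential-inclusion argument analogous to the one used in Theorem~\ref{theo:1}). A secondary point of care is justifying that $\dot x_v$ admits a uniform bound independent of the closed-loop trajectory, so that the term $-\rho_3 e_3 \tau \dot x_v$ can be absorbed into $\alpha_1$ without a circular dependence on $e_3$ itself.
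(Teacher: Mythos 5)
Your proposal is correct and follows essentially the same route as the paper's proof: the auxiliary $\xi$--system is used to cancel the saturation cross term $\rho_3 e_3 k_s(\delta+\xi)$ (yielding the $-(k_\xi-\tfrac{1}{2})\xi^2$ term via the same Young step $-\xi\delta\le\tfrac12\xi^2+\tfrac12\delta^2$), the $\sigma$--modification gives the $-\tfrac12\sigma_i\|\tilde W_i\|^2+\tfrac12\sigma_i\|W_i\|^2$ split, the $\epsilon_f$ and $\tau\dot x_v$ cross terms are absorbed by Young's inequality using the a priori bound on $\dot x_v$ from the saturated input and stable valve dynamics, and the two-regime/switching-surface treatment of \eqref{Eqn:xi_aux} is handled in the Filippov sense exactly as the paper does with its $\lambda$-convexified terms $f_{\xi,i}^{\lambda}$. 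The only cosmetic difference is the bookkeeping of Young's inequality (you set $k_{30}=k_{\rho_3}-\rho_3^2$, the paper splits $k_{\rho_3}=k_{30}+k_{31}+k_{32}$ and puts $\tfrac{(\rho_3\tau\varsigma)^2}{4k_{31}}+\tfrac{(\rho_3\bar\epsilon)^2}{4k_{32}}$ into $\alpha_1$), which changes the constants but not the argument.
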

\begin{proof}
Along the similar line with Theorem~\ref{theo:1}, the monotonicity of the Lyapunov function $V_{L,3}$ has to be checked for all the Filippov solutions, i.e., the continuous sub-regions (where $|\xi|>\mu$, or $|\xi|<\mu$) and the separating surfaces (where $|\xi|=\mu$).\\
In view of this and of the analysis in~\cite{di2016extended}, with~\eqref{Eqn:low_u_nnn_final} and~\eqref{Eqn:xi_aux}, one can write the derivative of $V_{L,3}$ in the form
$$\begin{array}{ll}
\dot V_{L,3}=
-k_{\rho_3}e_3^2-
f_{\xi,1}^{\lambda}+f_{\xi,2}^{\lambda}-\rho_3e_3\tau \dot x_v+\rho_3e_3\epsilon_f(Z)+\tilde W^{\top}\sigma \hat{W}
\end{array}$$
where for $i=1,2$, $$f_{\xi,i}^{\lambda}=\left\{\begin{array}{ll}f_{\xi,i}&|\xi|>\mu\\
\lambda f_{\xi,i}& |\xi|=\mu\\
0&|\xi|<\mu,
\end{array}\right.$$
$\lambda\in [0\ 1]$, $f_{\xi,1}=(k_{\xi}-\frac{1}{2})\xi^2$, $f_{\xi,2}=\rho_3e_3k_s(\delta+\xi)-|\rho_3e_3k_s(\delta+\xi)|$.\\
Since the term $\rho_3e_3k_s(\delta+\xi)-|\rho_3e_3k_s(\delta+\xi)|\leq0$, one has
	\begin{equation}\label{Eqn:dlyap_l3}
	\begin{array}{ll}
	\dot V_{L,3}\leq&-k_{\rho_3}e_3^2-f_{\xi,1}^{\lambda}-\rho_3e_3\tau \dot x_v+\\
	&\rho_3e_3\epsilon_f(Z)+\tilde W^{\top} \sigma\hat{W}\\
	\end{array}
	\end{equation}
	Define $k_{30}>0$, $k_{31}>0$, and $k_{32}>0$, such that $k_{ 30}+k_{31}+k_{32}=k_{\rho_3}$, one can prove
	\begin{subequations}\label{Eqn:lyap_k31k33e3}
	\begin{align}\label{Eqn:lyap_k31e3}
	-k_{31}e_3^2-\rho_3e_3\tau \dot x_v\leq\frac{(\rho_3\tau\dot x_v)^2}{4k_{ 31}}
	\end{align}

	\begin{align}\label{Eqn:lyap_k33e3}
	-k_{32}e_3^2+\rho_3e_3\epsilon_f(Z)\leq\frac{(\rho_3\bar {\epsilon})^2}{4k_{32}}
	\end{align}
		\end{subequations}
	Recalling the input constraint $|u|\leq u_{max}$, taking integral of both side of ${{\dot x}_v} = \frac{1}{\tau }({k_s}u - {x_v})$, it holds that
	$$
	\begin{array}{ll}
	|x_v(t)|&=|x_v(0)e_c^{-\frac{t}{\tau}}+\frac{k_s}{\tau}\int^{\top}e_c^{-\frac{1}{\tau}(t-\sigma)}ud\sigma|\\
	&\leq |x_v(0)|e_c^{-\frac{t}{\tau}}+\frac{k_s}{\tau}u_{max}|\int^{\top}e_c^{-\frac{1}{\tau}(t-\sigma)}d\sigma|:=\rho
	\end{array}$$
	Then one can also prove that $|\dot x_v|\leq \frac{1}{\tau}(k_su_{max}+\rho):=\varsigma$. In view of this, and recalling \eqref{Eqn:lyap_k31k33e3}, from \eqref{Eqn:dlyap_l3}, one can compute
	\begin{equation}\label{Eqn:lypu_d_L11}
	\begin{array}{ll}
	\dot V_{L,3}
	\leq&-k_{\rho_3}e_3^2-f_{\xi,1}^{\lambda}+\\
	&\frac{(\rho_3\tau \varsigma)^2}{4k_{ 31}}+\frac{(\rho_3\bar {\epsilon})^2}{4k_{32}}+\tilde W^{\top}\sigma \hat{W}\\
	\end{array}
	\end{equation}
	Considering that, for $i=1,2,3$
	\begin{equation*}\label{Eqn:W_f_b}
	\begin{array}{ll}
	\sigma_i\tilde W_i^{\top} \hat{W}_i&=\sigma_i\tilde W_i^{\top} (W_i-\tilde{W}_i)\\
	&\leq-\frac{1}{2}\sigma_i\|\tilde W_i\|^2+\frac{1}{2}\sigma_i\|W_i\|^2
	\end{array}
	\end{equation*}
	Then
	\begin{equation}\label{Eqn:VL3_B}
	\begin{array}{ll}
	\dot V_{L,3}\leq&-k_{30}e_3^2-f_{\xi,1}^{\lambda}-\frac{1}{2}\sigma_1\|\tilde W_1\|^2-\\&\frac{1}{2}\sigma_2\|\tilde W_2\|^2-\frac{1}{2}\sigma_3\|\tilde W_3\|^2+\alpha_1
	\end{array}
	\end{equation}
	where $\alpha_1=\frac{1}{2}\sigma_1\|W_1\|^2+\frac{1}{2}\sigma_2\|W_2\|^2+\frac{1}{2}\sigma_3\|W_3\|^2+\frac{(\rho_3\tau \varsigma)^2}{4k_{ 31}}+\frac{ ({\rho_3\bar\epsilon})^2}{4k_{32}}$.
	In view of \eqref{Eqn:VL3_B},  $\dot V_{L,3}$ converges to zero. Considering also that $-\dot V_{L,3}\geq k_{30}e_3^2+f_{\xi,1}^{\lambda}+\frac{1}{2}\sigma_1\|\tilde W_1\|^2+\frac{1}{2}\sigma_2\|\tilde W_2\|^2+\frac{1}{2}\sigma_3\|\tilde W_3\|^2-\alpha_1$,  result \eqref{Eqn:state_b} follows, see~\cite{di2013hybrid}.
\hfill $\square$
\end{proof}
Assume now that the low-level controller has been run such that the condition \eqref{Eqn:e_3-b} has been achieved. From \eqref{Eqn:e_3-b}, it can be noted that the actual input to be applied to the system \eqref{Eqn:e_3-b-model} is not exactly $F_{L,d}$ but $F_{L,d}$ plus a residual term, i.e.,
\begin{equation}\label{Eqn:u_h-d}
F_L=F_{L,d}+\eta
\end{equation}
where $\|\eta\|\leq\sqrt{\frac{\alpha_1}{k_{\rho_3}}}+\kappa $, $\kappa$ is null or a small positive value due to~\eqref{Eqn:F_f_d}. \\
The following result can be stated for the high-level controller:

\begin{theorem}   
	Under the result of Theorem~\ref{thm:l_s}, with control  \eqref{Eqn:u_h-d}, and adaptation algorithms \eqref{Eqn:esti_update} and \eqref{Eqn:fric_para_adapt}, the derivative of the Lyapunov function $V_{H,3}$ converges to zero and the variables $e_1$, $e_2$ are uniformly ultimately bounded, i.e.,
	\begin{subequations}\label{Eqn:state_h_b}
		\begin{align*} |e_1|\leq& \sqrt{\frac{M^2\alpha_1}{4k_1k_{\rho_3}k_{21}}}\\
		 |e_2|\leq& \sqrt{\frac{M^2\alpha_1}{4k_{2}k_{\rho_3}k_{21}}}
		\end{align*}
		
	\end{subequations}

\end{theorem}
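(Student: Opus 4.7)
The plan is to redo the high-level Lyapunov analysis from Theorem~\ref{theo:1} with the actual force $F_L=F_{L,d}+\eta$ from \eqref{Eqn:u_h-d} in place of the ideal $F_{L,d}$. Because $F_L$ enters only through $\dot e_2$ in \eqref{Eqn:e_3-b-model}, and because the adaptation cancellations for $\hat m,\hat J,\hat F_C,\hat b$ in \eqref{Eqn:esti_update} and \eqref{Eqn:fric_para_adapt} are unchanged, the derivative of $V_{H,3}$ picks up exactly one additional cross term $\rho_2 N e_2\eta$ relative to the clean expression \eqref{Eqn:lyap3_d}. The $\operatorname{sgn}(\dot\varphi)$ discontinuity is dealt with exactly as in Theorem~\ref{theo:1} by splitting $z$ into $\mathcal{Z}_1,\mathcal{Z}_2,\mathcal{Z}_3$ and reading the result in the Filippov sense.

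Next, I would absorb the cross term by Young's inequality: split $k_{\rho_2}=k_{21}+(k_{\rho_2}-k_{21})$ with $0<k_{21}<k_{\rho_2}$, and estimate
\begin{equation*}
\rho_2 N e_2\eta \;\leq\; k_{21} e_2^2 + \frac{(\rho_2 N)^2}{4k_{21}}\,\|\eta\|^2 .
\end{equation*}
Combining with the bound $\|\eta\|^2\leq \alpha_1/k_{\rho_3}$ (plus a negligible $\kappa$-contribution) obtained from Theorem~\ref{thm:l_s}, and collecting the coefficient $(\rho_2 N)^2$ into a single constant $M^2$, the derivative becomes
\begin{equation*}
\dot V_{H,3}\;\leq\; -k_{\rho_1} e_1^2 - (k_{\rho_2}-k_{21})\,e_2^2 + \frac{M^2\alpha_1}{4\,k_{\rho_3}\,k_{21}} ,
\end{equation*}
holding for almost every $t$ along every Filippov solution.

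From this dissipation inequality the ultimate bounds follow by the same invariant-set argument used at the end of Theorem~\ref{thm:l_s}: outside the set where $k_{\rho_1} e_1^2 \leq C$ and $(k_{\rho_2}-k_{21})e_2^2 \leq C$ (with $C:=M^2\alpha_1/(4k_{\rho_3}k_{21})$) the right-hand side is strictly negative, so trajectories must enter this set and cannot escape it. Rewriting $k_{\rho_1}=k_1\rho_1$, choosing $k_{21}$ so that the net gain on $e_2^2$ matches $k_2$ (absorbing $\rho_2$ into $M$), and solving for $|e_1|,|e_2|$ reproduces exactly the stated expressions for the residual balls, and in particular shows that $\dot V_{H,3}$ converges to zero since $V_{H,3}$ is bounded and monotonically decreasing once outside the residual set.

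The main obstacle I anticipate is not any single calculation but a clean justification of the cascade composition: the bound $\|\eta\|\leq\sqrt{\alpha_1/k_{\rho_3}}+\kappa$ is an asymptotic property of the low layer, yet I need to use it pointwise inside the high-level derivative. A rigorous argument requires showing that (i) the signals driving the low layer through $F_{L,d}$ in \eqref{Eqn:FL_d} (which itself depends on $e_1,e_2,\hat m,\hat J,\hat F_C,\hat b$) stay bounded over the time interval on which Theorem~\ref{thm:l_s} applies, so that Theorem~\ref{thm:l_s} is not invoked circularly, and (ii) the interconnection satisfies a small-gain-like condition, implicitly enforced by choosing $k_{21}<k_{\rho_2}$ and the low-level gain $k_{30}$ large enough. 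Once this non-circularity is established, the remaining steps are routine quadratic bookkeeping.
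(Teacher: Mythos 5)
Your proposal follows essentially the same route as the paper: the same single cross term $Me_2\eta$ (with $M=\rho_2 N$, which the paper never defines explicitly), the same split $k_2=k_{20}+k_{21}$ with Young's inequality $-k_{21}e_2^2+Me_2\eta\leq M^2\|\eta\|^2/(4k_{21})$, and the same substitution of the low-level bound on $\eta$ to obtain the residual constant $M^2\alpha_1/(4k_{\rho_3}k_{21})$. Your bookkeeping is in fact slightly more careful than the paper's (whose final display retains $-k_2e_2^2$ where $-k_{20}e_2^2$ should appear, and which does not address the Filippov details or the cascade non-circularity you flag), but the argument is the same.
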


\begin{proof}
Along the similar line with Theorem~\ref{theo:1}, for all the Filippov solutions, applying the control input \eqref{Eqn:u_h-d}, the derivative of Lyapunov function $V_{H,3}$ can be computed
\begin{equation*}
\dot V_{H,3}=-k_1e_1^2-k_2e_2^2+Me_2\eta.
\end{equation*}	
Denoting $k_{20}>0$, $k_{21}>0$ such that $k_2=k_{20}+k_{21}$, and following the same line with \eqref{Eqn:lyap_k31k33e3}, one can also prove that

\begin{equation*}
-k_{21}e_2^2+Me_2\eta\leq \frac{M^2\alpha_1}{4k_{\rho_3}k_{21}}
\end{equation*}
Then, it holds that
\begin{equation*}
\dot V_{H,3}\leq-k_1e_1^2-k_2e_2^2+\frac{M^2\alpha_1}{4k_{\rho_3}k_{21}}
\end{equation*}
\hfill $\square$
	\end{proof}

\begin{remark}
If  the parameter $\sigma$ in \eqref{Eqn: W_H} is chosen sufficiently small and the neural network has been tuned properly (by increasing the number of neurons) such that the bound $\bar \epsilon$ is made arbitrarily small, then the Lyapunov function $V_{L,3}$, $e_3$ and $e_4$ converge to zero, and subsequently, $V_{H,3}$, $e_1$, and $e_2$ converge to zero as well.
\end{remark}

\section{Simulation results}\label{sec:6}
\begin{figure}[ht]
	\center
	\includegraphics[width=0.95\columnwidth]{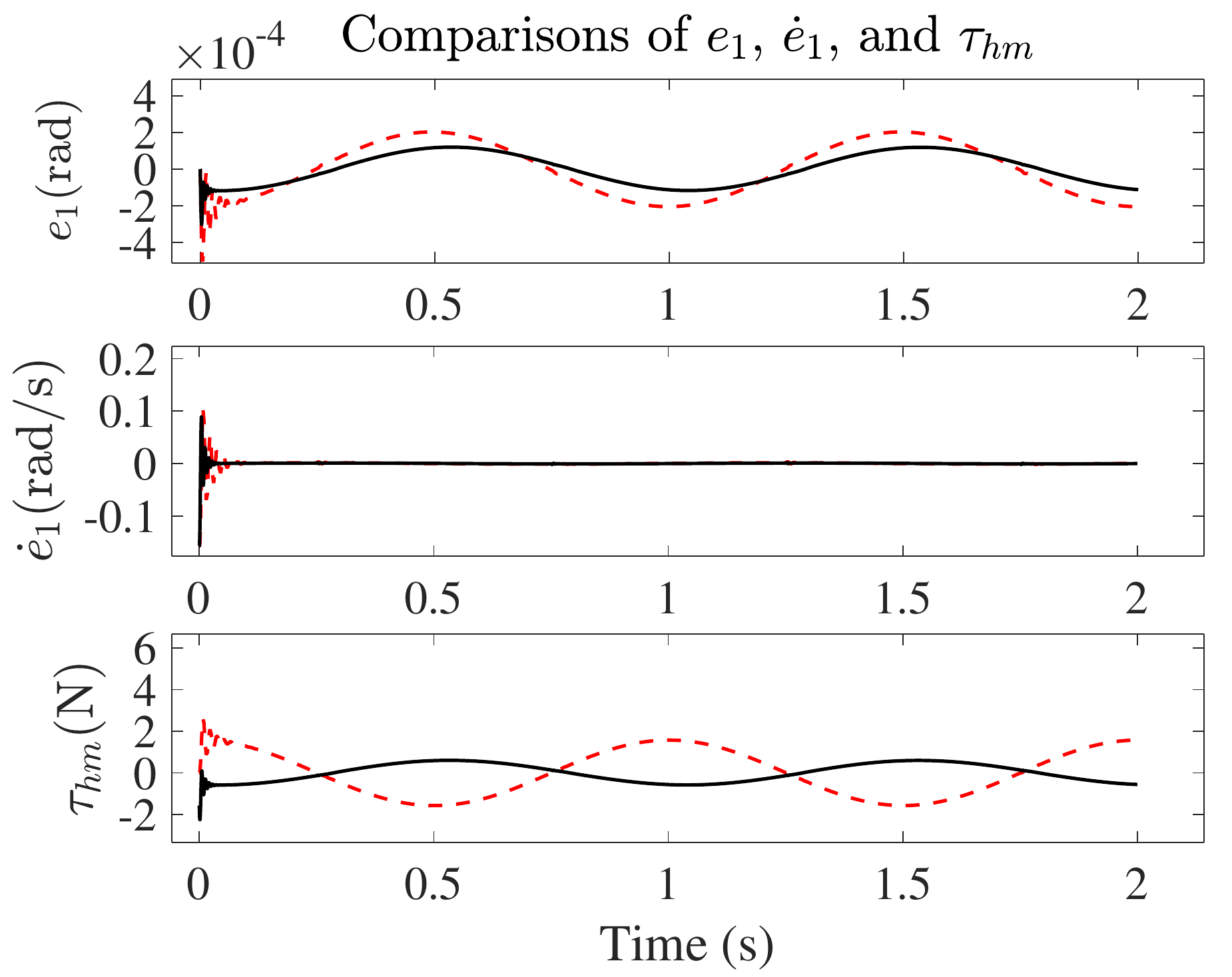}
	\caption{Comparison of $e_1$, $\dot e_1$, and $\tau_{hm}$: black solid lines represent the values computed with the proposed approach, while red dashed lines stand for the ones computed with PD controller.}
	\label{fig:states}
\end{figure}
\begin{figure}[h]
	\center
	\includegraphics[width=0.95\columnwidth]{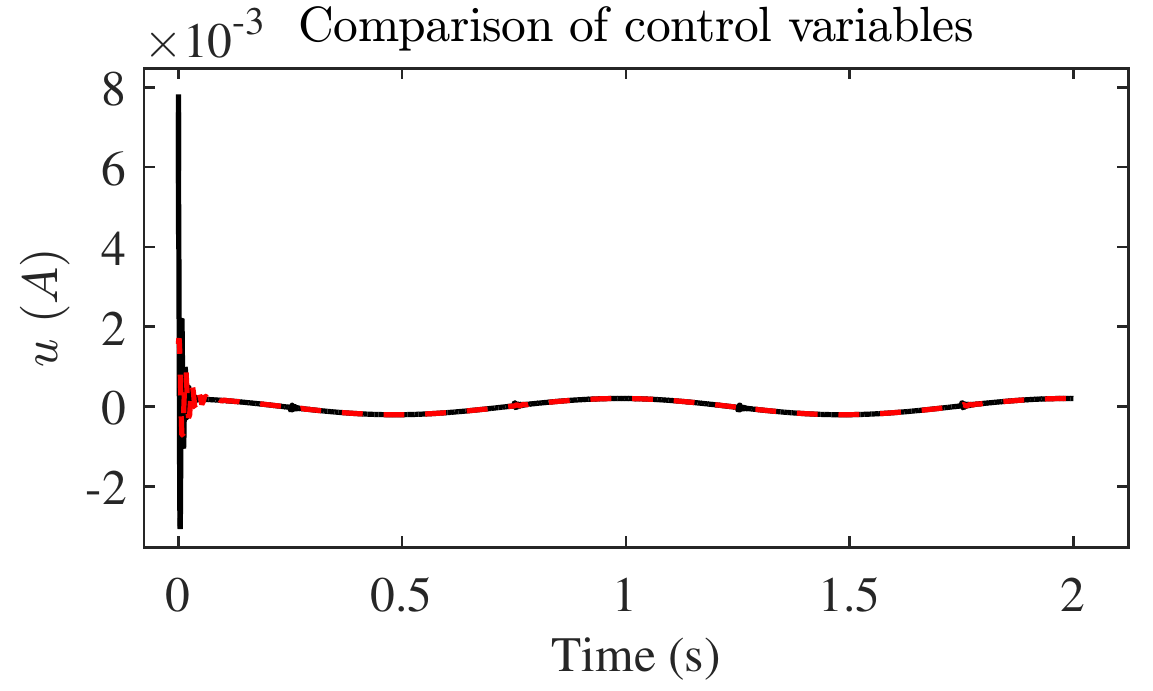}
	\caption{Comparison of control variable: black solid line is the values computed with the proposed approach, while red dashed line represent the one computed with PD controller..}
	\label{fig:u}
\end{figure}
In this section, simulation results
are reported to show the performance of the hierarchical adaptive control algorithms previously described. The values of the key parameters of system \eqref{Eqn:state} are listed in Table~\ref{tb:para}.
\begin{table}[hb]
\begin{center}
\centering\caption{Model parameters}\label{tb:para}
\vskip 0.2cm
\scalebox{0.8}{
\begin{tabular}{cccc}
Parameters & Values & Parameters & Values \\\hline
$m$ ($kg$) & 70 & $C_{in}+C_{ex}$  ($m^3/s \cdot Pa$)  & $2\times 10^{-14}$\\ \hline
$P_p $ ($MPa$) & 5 & $k_s$ (m/A) & 0.0146 \\ \hline
$b$ ($N\cdot s/m$) & 5000 & $A_1$ ($m^2$) & $3.25\times  10^{-4}$ \\ \hline
$F_C$ ($N$) & 8 & $A_2$ ($m^2$) & $2.10\time 10^{-4} $\\ \hline
$\tau$ ($N\cdot s/m$) & 0.0015 & $l_0 ($m$)$ & 0.28 \\ \hline
$K_c$ ($m^3s/Pa$) & $8.8\times 10^{-16}$ & $K_q$ ($m^3s\cdot A$) & 0.52 \\ \hline
$x_{c0}(m)$  & 0.07 & $l_0$ ($m$) & 0.1 \\ \hline
\end{tabular}
}
\end{center}

\end{table}

\begin{figure}[h!]
	\center
	\includegraphics[width=0.95\columnwidth]{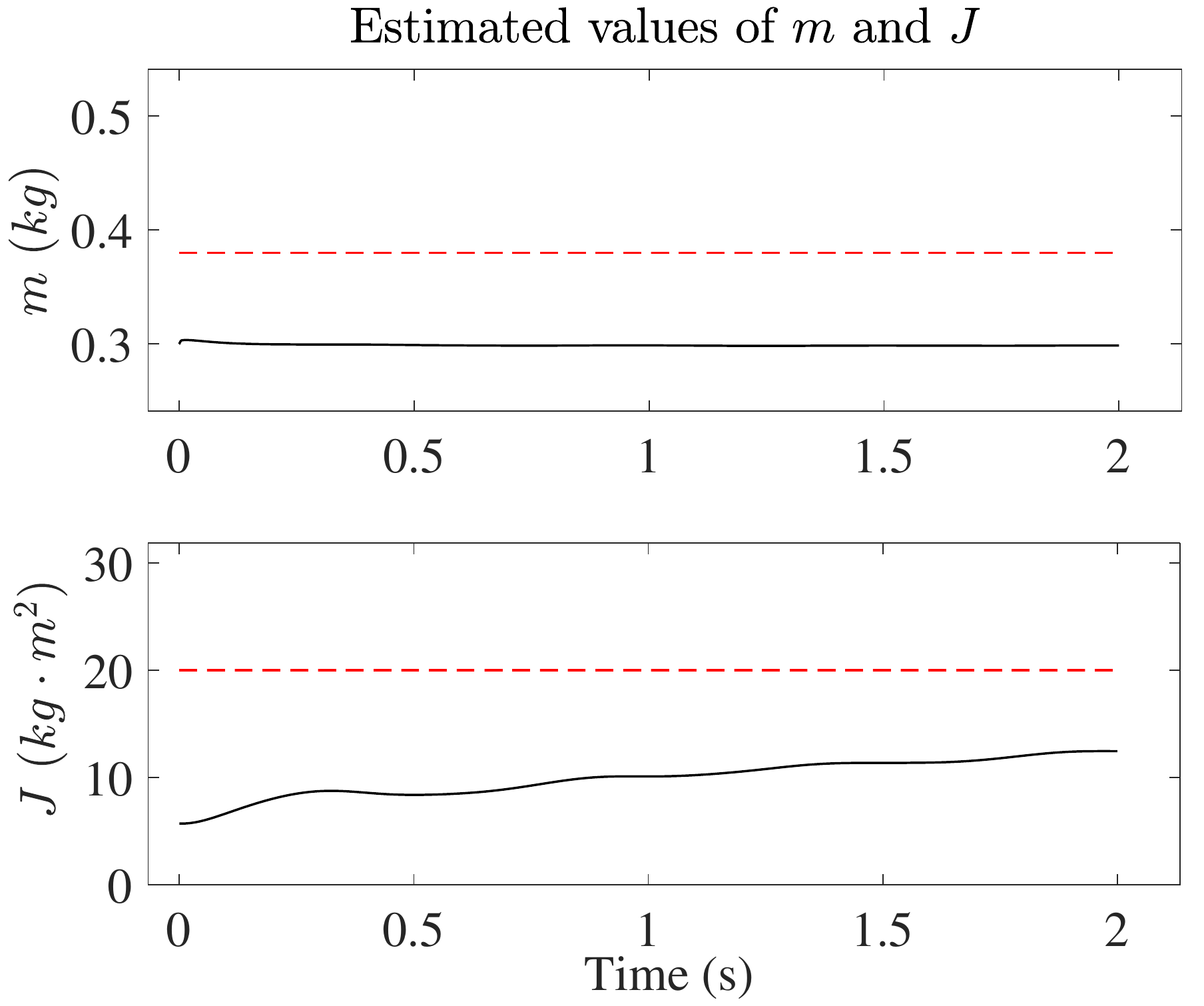}
	\caption{Estimated values of $m$ and $J$: black solid lines are the  estimated values, while red dashed lines are the true ones.}
	\label{fig:mass_inertia}
\end{figure}
\begin{figure}[h!]
	\center
	\includegraphics[width=0.95\columnwidth]{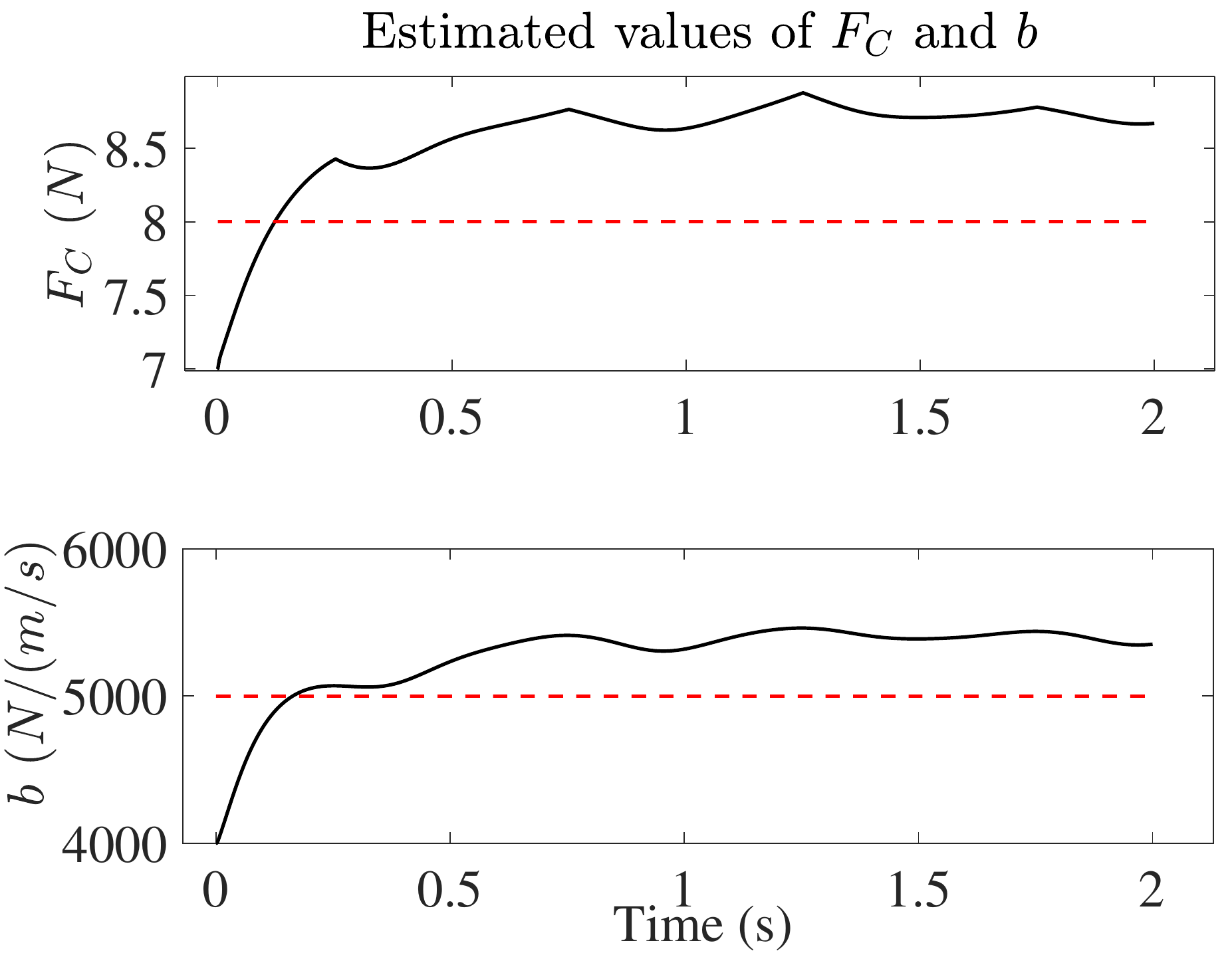}
	\caption{Estimated values of $F_c$ and $b$: black solid lines represent the estimated values, while red dashed lines stand for  the true ones.}
	\label{fig:FC_b}
\end{figure}
\begin{figure}[h!]
	\center
	\includegraphics[width=0.95\columnwidth]{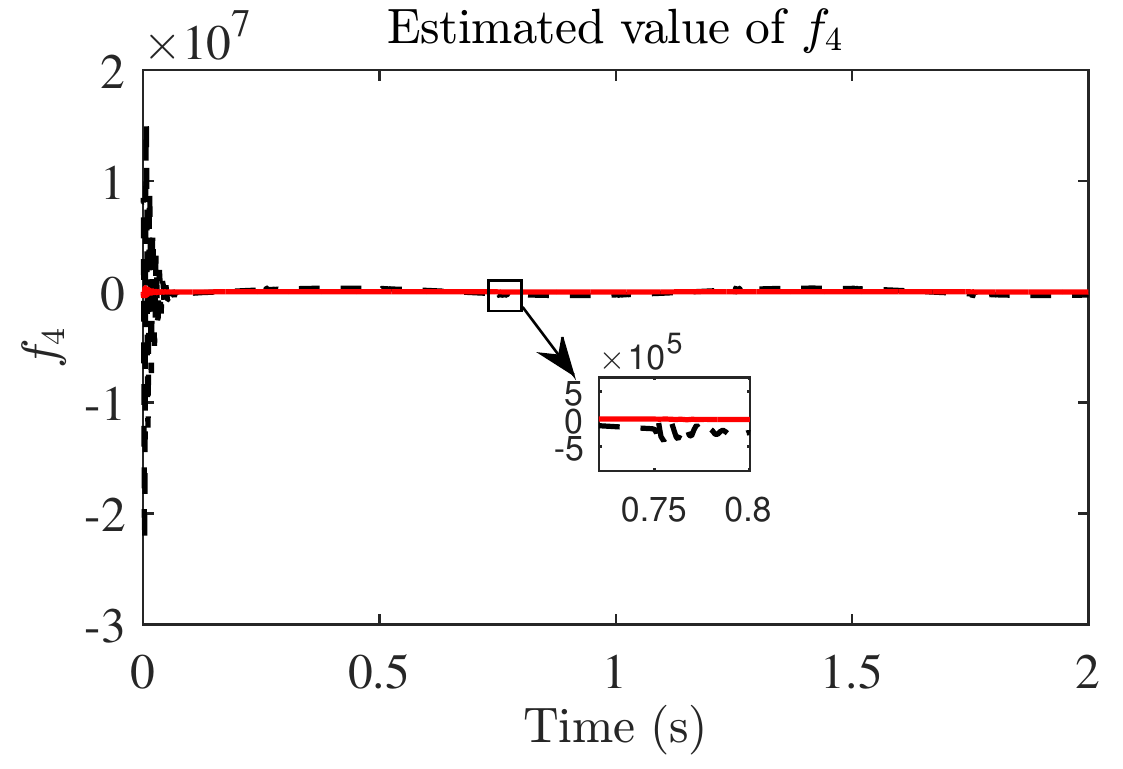}
	\caption{Estimated values of $f_4$: black solid lines is the  estimated value, while red dashed line is the true one.}
	\label{fig:W_n5}
\end{figure}
In the simulation experiment, the value of the reference joint angle has been set to $\phi_d=0.025sin(2\pi t)$ and the sampling time has been selected as $0.001$ $s$. The maximal value of the control variable is $u_{max}=2.5\times 10^{-2}$ $A$. The parameters in \eqref{Eqn:tau_hm} are $k_{\varphi}=5000$ and $k_{\dot \varphi}=10$. The high-level regulator has been designed with $k_1=500$, $k_2=200$, and $\rho_1=\rho_2=1$. The corresponding adaptation algorithm has been devised with $q_J=1000$, $q_m=0.01$, $q_1=0.007$, and $q_2=0.0005$. The low-level controller has been implemented with  $k_3=1000$, $\rho_3=1$. The parameter in~\eqref{Eqn:F_f_d} is selected as $\epsilon_0=0.001$, In neural network \eqref{Eqn:neural_network}, the continuous RBF contains $n_1=3^3$ nodes with centers of $Z$ evenly spaced in the domain $[-4,\,4]\times[-1,\,1]\times[-4,\,4]$, and the values of the variance is chosen as $\upsilon=20000$. Also the nodes for the jump approximation function centered at $c_1$ and $c_2$ are $n_2=8\times 4$. The  parameter for the updating law \eqref{Eqn: W_H} is $\Gamma=100000 I_{n_1+n_2+n_3}$ and the robust term is $\sigma=0.2I_{n_1+n_2+n_3}$. The simulation experiment has been run with null initial conditions. For comparison, a PD controller is designed with proportional gain $k_P=-1$ and derivative gain $k_D=-0.01$. The evolution of the states and control variables with the proposed approach and the PD algorithm, are  reported in Figure.~\ref{fig:states}-\ref{fig:u}. From Figure.~\ref{fig:states}, it can be seen that after an initial transient, the proposed control algorithm shows satisfactory tracking performance, while the tracking error and the interaction torque computed with the proposed algorithm are smaller than that with the PD algorithm.  Also, the estimations of the uncertain parameters and the nonlinear function $f_4$ of the system with the proposed algorithm are  reported in Fig.~\ref{fig:mass_inertia}-\ref{fig:W_n5},
which shows that, the estimated values are close to their true values. Note that, some of the estimated values are not converging to their actual value. To further show the convergence properties of the estimated values, their true values must be known a-priori, which is usually nontrivial for practical reasons.
\section{Conclusion}\label{sec:7}
In this paper, a hierarchical Lyapunov-based adaptive cascade control scheme of a lower-limb exoskeleton with control saturation has been developed for joint angle position tracking objective.  Adaptation algorithms have been proposed to estimate unknown model parameters at the both layers. At the lower layer, the neural network with continuous and discontinuous basis function has been used to approximate piecewise discontinuous nonlinear function. Thanks to the estimating techniques prescribed, the proposed approach allows to use imprecise models, which is much more reasonable for practical reasons. Moreover,  with suitable control parameters design, this approach can also minimize the interaction torque between machine and human. The robustness of the  closed-loop system has been discussed under control saturation. Simulation experiments including a comparison with PD have been reported, showing that the proposed approach is satisfactory in tracking performance and in interaction torque reduction, and outperforms PD controller in these respects.
Future work will consider the use of learning based algorithms at the lower layer so as to optimize the switching condition between Mode1 and Mode2.

\section*{Acknowledgements}

This work is supported by the National Key R$\&$D Program of China
2018YFB1305105 and the National Natural Science Foundation of China
under Grants U1564214, 61751311,  618\break25305.

\bibliographystyle{plain}
\bibliography{ifacconf}             

\end{document}